\long\def\remove#1{}
\newtheorem{theorem}{Theorem}[section] % section
\newtheorem{claim}[theorem]{Claim}
\newtheorem{proposition}[theorem]{Proposition}
\newtheorem{definition}[theorem]{Definition}
\newenvironment{proof}{{\em Proof:}}{\hfill{\hfill\rule{2mm}{2mm}}}
\newcommand {\mm}[1] {\ifmmode{#1}\else{\mbox{\(#1\)}}\fi}
\newcommand{\eps}{{\varepsilon}}
\newcommand{\Z}			{\mathrm {\mathbb{Z}}}
\newcommand{\ann}	{{\sf a}}
\newcommand{\mysimeq}           {{\simeq}}
\newcommand{\rank}		{\mm {\rm rank}}
\newcommand{\Lk}			{\mathrm Lk\,}
\newcommand{\st}		{\mathrm {St\,}}
\newcommand{\G}		{\mathcal{G}}
\newcommand{\Rips}	{{\cal R}}
\newcommand{\Hom}	{{\rm Hom}}
\newcommand{\newfixpt}		{u}
\newcommand{\fv}		{{f_{\mathrm{V}}}} %{{f_{\mathbf{v}}}}
\newcommand{\id}		{{\mathrm id}}
\newcommand{\cst}[1]		{{\overline{\st \, #1}}}
\newcommand{\hh}		{{\hat{h}}}
\newcommand{\vmap}	{{\pi}}
\newcommand{\hv}		{{\hat{\vmap}}}
\newcommand{\hmap}	{{h}}
\begin{document}

\title{Computing Topological Persistence for Simplicial Maps}

\author{
Tamal K. Dey\thanks{
Department of Computer Science and Engineering,
The Ohio State University, Columbus, OH 43210, USA.
Email: {\tt tamaldey@cse.ohio-state.edu}}
\quad\quad
Fengtao Fan\thanks{
Department of Computer Science and Engineering,
The Ohio State University, Columbus, OH 43210, USA.
Email: {\tt fanf@cse.ohio-state.edu}}
\quad\quad Yusu Wang\thanks{
Department of Computer Science and Engineering,
The Ohio State University, Columbus, OH 43210, USA.
Email: {\tt yusu@cse.ohio-state.edu}}
}

\date{}
\maketitle

\begin{abstract}
Algorithms for persistent homology and zigzag persistent homology 
are well-studied for persistence modules where homomorphisms are induced 
by inclusion maps. 
In this paper, 
we propose a practical algorithm for computing persistence 
under $\mathbb{Z}_2$ coefficients 
for a sequence of general simplicial maps
and show how
these maps arise naturally in some applications of topological data analysis.

First, we observe that it is not hard to simulate 
simplicial maps by inclusion maps but not necessarily in a monotone direction. 
This, combined with the known algorithms for zigzag persistence, provides 
an algorithm for computing the persistence induced by simplicial maps. 

Our main result is that the above simple minded approach can be improved
for a sequence of simplicial maps given in a monotone direction.
A simplicial map can be decomposed into a set of elementary
inclusions and vertex collapses--two atomic operations
that can be supported efficiently with the notion of
simplex annotations for computing persistent homology.
A consistent annotation through these
atomic operations implies the maintenance of a
consistent cohomology basis, hence a homology basis by duality.
While the idea of maintaining a cohomology basis through
an inclusion is not new,
maintaining them through a vertex collapse is new, which constitutes an
important atomic operation for simulating simplicial maps.
Annotations support the vertex collapse in addition to the
usual inclusion quite naturally.

Finally, we exhibit an application
of this new tool in which we approximate the persistence 
diagram of a filtration of Rips complexes where 
vertex collapses are used to tame the blow-up in size. 

\end{abstract}

\section{Introduction}
\label{sec:intro}
Several applications in topological data analysis encounter the
following problem: when a simplicial
complex $K_1$ is modified to another complex $K_2$, how do the
topological features change. If the modification pertains only
to inclusions, that is, $K_1\subseteq K_2$, 
one can quantify the changes by the 
{\em persistent homology group}. 
This idea of topological persistence, originally introduced in~\cite{ELZ02},
has been explored extensively both algebraically and algorithmically in
the past decade, see e.g.~\cite{BD13,Carl09,CK11,CSM09,CEM06,EH09,NMS11,ZC05}. 
When the modification is more general than the inclusions, modeled
by considering the map
$K_1 \rightarrow K_2$ to be a simplicial map instead of
an inclusion map, the status is not the same. 
In this paper, we present an efficient algorithm for
computing topological persistence for simplicial maps and show its
application to a problem in topological data analysis.

%The idea of topological persistence along with its computation and application
%has been an active area of research for the past decade; see e.g~\cite{Carl09,EH09}.
%It was originally defined and explored 
%for nested sequence of simplicial
%complexes~\cite{ELZ02} which induced homomorphisms among their homology groups
%under inclusions. Later, further developments considered persistent
%homology and zigzag persistent homology for 
%general homomorphisms~\cite{CS10}.
%Although persistent homology and zigzag persistent homology under 
%inclusions have been investigated
%from algorithmic view point quite elaborately and 
%successfully~\cite{CK11,CSM09,CEM06,ELZ02,NMS11}, 
%the status is not the same 
%for general homomorphisms. 
Traditional persistent homology is defined for a \emph{monotone} sequence of homomorphisms, where all the maps $K_i \rightarrow K_{i+1}$ are along the same direction. 
In \cite{CS10}, Carlsson and de Silva introduced the \emph{zigzag persistence} defined for a zigzagging sequence of homomorphisms containing maps both of the form $K_i \rightarrow K_{i+1}$ and $K_i \leftarrow K_{i+1}$. 
They also presented a generic prototype algorithm for computing zigzag persistence induced by general homomorphisms. 
%In \cite{CS10}, Carlsson and de Silva presented a generic prototype algorithm for computing zigzag persistence induced by general homomorphisms.
It requires an explicit representation of the homomorphisms between the homology groups of two consecutive complexes in a sequence. 
In particular, if the input is given in terms
of maps between input spaces such as a continuous
map $f:K_i\rightarrow K_{i+1}$, a representation
of the induced homomorphism $f_*: H_*(K_i)\rightarrow H_*(K_{i+1})$ between
the homology groups needs to be computed.  
Often this step is costly and, in general, leads to $O(n^4)$ 
algorithm where each input complex has
$O(n)$ simplices. 
In contrast, when the map $f$ is an inclusion, the
persistence algorithm computes the persistent homology in
$O(n^3)$ time where $n$ is the total number of simplices inserted. 
%An $O(n^3)$ time algorithm is also developed for a zigzagging sequence of inclusions \cite{CSM09}. 

Using classical algebraic topological concepts such as mapping cylinders, 
it is not hard to simulate a simplicial map $f: K_i\rightarrow K_{i+1}$
by zigzag inclusions through
an intermediate complex $\hat{K}$ built from $K_i$. 
However, the complex $\hat{K}$, if constructed na\"{i}vely, 
may have a huge size. 
As detailed in section~\ref{sec:simulate},
one can improve upon this na\"{i}ve construction
which converts the input zigzag filtration connected by simplicial maps 
to another zigzag filtration connected only by inclusion maps.
One can then take advantage of the efficient algorithms to compute the 
persistence diagram for an inclusion-induced zigzag 
filtration \cite{CSM09,NMS11}.

Our main result detailed in Sections 
\ref{SEC:ANNOTATIONS} and \ref{SEC:ALGORITHM} 
is that when the input 
filtration is connected by a monotone (i.e, non-zigzag) 
sequence of simplicial maps, we can improve further upon the  
above construction by taking advantage of
annotations introduced recently in \cite{BCCDW12}.
%(also called homology signatures~\cite{EN11}) 
One of the main advantages of this approach is that it avoids the
detour through $\hat{K}$, and thus requires far fewer 
operations to move from $K_i$ to $K_{i+1}$; see Figure~\ref{fig:compare}.
Furthermore, the main auxiliary structure this new direct approach 
avails is  a set of
binary bits (elements of $\mathbb{Z}_2$)
attached to simplices which together can be 
viewed as a single binary matrix. 
This is in contrast to the simple-minded coning approach which uses the zigzag
persistence algorithm~\cite{CSM09} that requires
multiple such matrices.

One key aspect of our annotation based approach is that
it lets us simulate the simplicial maps by a sequence
of {\em inclusions} and {\em vertex collapses} in monotone direction 
{\em without} zigzag.
An annotation is linked with a cohomology basis which by duality
corresponds to a homology basis. Thus, annotations
over inclusions and vertex collapses allow us to maintain a consistent
homology basis indirectly under simplicial maps and 
infer the persistent homology. 
Our handling of inclusions can be seen as an
alternative formulation of the algorithm for computing persistent 
cohomology proposed in \cite{DMV11}. However, the handling of 
vertex collapses (which are neither inclusions nor deletions) 
in the context of persistence is new, and has not
been addressed previously.
%It has been observed that the algorithm proposed in \cite{DMV11} has 
%a very good practical performance in computing persistence (co)homology for non-zigzag inclusion-based filtrations \cite{DMV11b}.
%Our approach based on annotations extends the efficient maintenance 
%of a cohomology basis to the case when there are also collapse operations, 
%and hence we expect the resulting algorithm to be equally efficient in this more general setting. 

Finally, in Section~\ref{SEC:PDAPPROX}, we show 
an application where
the need for
computing persistence under simplicial maps arises naturally.
Our algorithm from Section \ref{SEC:ALGORITHM} can be used 
for this application directly. 
It is known that the persistence diagram~\cite{CEH07} of 
Vietoris-Rips (Rips in short)
filtrations provides avenues for topological analysis of 
%data~\cite{ALS11,CGOS09,CO08,DSW11}.
data~\cite{ALS11,DSW11,Ghrist}.
However, the inclusive nature of Rips complexes makes its size
too huge to be taken advantage of in practice. 
%To address this
%blow-up in size, Sheehy~\cite{Sheehy} suggested to sparsify the Rips
%complex by a sub-sampling. He used an elegant weighting
%scheme to replace the simplicial maps induced by vertex
%collapses with a sequence of inclusion maps, so that classical algorithm such as~\cite{ELZ02}
%can be used. 
One can consider sparsified versions of
Rips complexes~\cite{Sheehy} or 
graph induced complexes~\cite{DFW13} by subsampling input points
which can be achieved by vertex collapses.
Our algorithm supports vertex collapses and thus
naturally yields to maps arising out of such subsampling.
%obviates the need for the weighting scheme.
%Recently, Sheehy~\cite{Sheehy} showed a novel method to approximate the persistence diagram of a Rips filtration
%from another filtration where vertex deletons compensate for the blow-up 
%in the number of simplices. The main observation is that a weighting scheme along with a sophisticated data
%structure called net-tree~\cite{} can replace the simplicial maps induced by vertex collapses
%with inclusions at the homology level. Our algorithm for computing persistences of simplicial maps 
%obviates the necessity of the weighting scheme and the net-tree data structure.

Throughout the paper, simplicial homology and
cohomology groups are defined with 
coefficients in $\mathbb{Z}_2$. 

\section{Preliminaries and simplicial maps}
\label{SEC:SIMPLICIALMAPS}
%\vspace*{0.1in}
\begin{definition}
Given a finite set $V$, a simplicial complex $K=K(V)$ is defined as a collection
of subsets $\{\sigma \subseteq V\}$ so that $\sigma\in K$ implies that
any subset $\sigma'\subseteq \sigma$ is in $K$. The vertex
set $V(K)$ of $K$ is $V$. The elements of $K$ 
are called its simplices. An element $\sigma\in K$
is a $p$-simplex if its cardinality is $p+1$. A simplex 
$\sigma'$ is a face of $\sigma$ and $\sigma$ is a coface of
$\sigma'$ if $\sigma'\subseteq \sigma$.
\end{definition}
 
\begin{definition}
Let $X$ be a subset of a simplicial complex $K$. The set
$\st X:=\{\sigma'\in K \,|\, \mbox{ $\exists \sigma\in X$ and $\sigma\subseteq \sigma'$}\}$
is called the star of $X$. The closure of $X$, denoted $\overline{X}$,
is the simplicial complex formed by simplices in $X$ and all of their
faces. The link of $X$ is
$ \Lk X := \cst X \setminus \st \overline{X} .$
\end{definition}

The star of $X$ consists of the simplices in $K$ 
that are cofaces of simplices in $X$. 
The link of $X$ consists of the faces of the simplices in its star which 
contain no vertex of $X$.
\subsection{(Co)homology groups}
We briefly introduce the notion of homology and cohomology groups here
which we use extensively; see e.g. Hatcher~\cite{Hatcher} for details.
Both groups are defined under $\Z_2$ coefficients.
A $p$-{\em chain} $c_p$ in a simplicial complex $K$ is a
formal sum of $p$-simplices, 
that is, $c_p=\Sigma\alpha_i\sigma_i$, $\alpha_i\in\{0,1\}$ 
and $\sigma_i\in K$. 
The chains under $\Z_2$-additions form an abelian group called
the $p$-chain group of $K$ and is denoted $C_p(K)$.
The boundary of a $p$-simplex $\sigma$, denoted $\partial_p \sigma$,
is defined to be the formal sum of its boundary $(p-1)$-simplices. We obtain 
the boundary homomorphism 
$\partial_p:C_p\rightarrow C_{p-1}$ given by
$\partial_p(\Sigma\alpha_i\sigma_i)=\Sigma \alpha_i(\partial_p\sigma_i)$.
The kernel of $\partial_p$ is 
the {\em cycle} group $Z_p\subseteq C_p$. The image of $\partial_p$ is
the boundary group $B_{p-1}\subseteq C_{p-1}$. It can be easily
verified that $\partial_p\circ \partial_{p-1}=0$ which makes 
the quotient group $H_p(K)=Z_p(K)/B_p(K)$, known as the $p$th homology
group, well defined.

Cohomology groups are defined by cochains, cocycles, and coboundaries
that are, in a sense, functional duals to the chains, cycles, 
and boundaries respectively.
A $p$-cochain is a homomorphism $c^p: C_p(K)\rightarrow \Z_2$ and thus
can be completely specified by its value on each $p$-simplex.
The $p$-cochain group $C^p(K)$ is the group of all cochains under
$\Z_2$-additions. The coboundary operator $\delta_p: C^{p}\rightarrow C^{p+1}$
sends $p$-cochains to $(p+1)$-cochains by evaluating $\delta_pc^p$ on
each chain $d_{p+1}\in C_{p+1}$ as $c^p(\partial_{p+1} d_{p+1})$. The kernel
of $\delta_p$ is the cocycle group $Z^p(K)$ and its image is the
coboundary group $B^{p+1}(K)$. Since $\delta_p\circ \delta_{p+1}=0$, we
have the quotient group $Z^p(K)/B^p(K)$ well defined which is called the
cohomology group $H^p(K)$.

\subsection{Simplicial maps}
\label{sec:simp-map}
\begin{definition}
A map $f: K \rightarrow K'$ is \emph{simplicial} if for every
simplex $\sigma=\{v_0,v_1,\ldots,v_k\}$ in $K$, 
$f(\sigma)=\{f(v_0),f(v_1),$ $\ldots,f(v_k)\}$ is a simplex in $K'$. 
The restriction $f_V$ of $f$ to $V(K)$ is a vertex map.
\end{definition}

A simplicial map $f: K_1\rightarrow K_2$
induces a homomorphism 
$H_p(K_1)\stackrel{f_*}{\rightarrow} H_p(K_2)$ for the 
homology groups in the forward
direction while a homomorphism $H^p(K_1)\stackrel{f^*}{\leftarrow} H^p(K_2)$ 
for the cohomology groups in the backward direction. The latter sends a 
cohomology class $[c]$ in $H^p(K_2)$ to the cohomology class
$[c']$ in $H^p(K_1)$ where $c'(c_p)= c(f(c_p))$ for each $c_p\in C_p(K_1)$.

\begin{definition}
A simplicial map $f: K\rightarrow K'$ is called {\em elementary} if
it is of one of the following two types: 
%the induced vertex map $\fv$ is injective everywhere except possibly
%on a set $X\subseteq V(K)$, $|X|>1$, for which $\fv(X)$ is a single vertex  in $K'$. 
%is a single vertex in $K'$. 
%$\fv(\fixpt)$ in $K'$ with $\fixpt \in X$; $\fixpt$, if exists, is called a \emph{fixed point} of $f$.  
\begin{itemize}
\item $f$ is injective, and 
$K'$ has at most one more simplex than $K$. In this case,
$f$ is called an {\em elementary inclusion}.
\item $f$ is not injective but is surjective, and the vertex
map $f_V$ is injective everywhere
except on a pair $\{u,v\}\subseteq V(K)$. In this case, $f$ is called an 
{\em elementary collapse}. An elementary collapse maps 
a pair of vertices into a single
vertex, and is injective on every other vertex.
\end{itemize}
\label{def-esimp}
\end{definition}
We observe that any simplicial map is a composition
of elementary simplicial maps (see Appendix~\ref{appendix:A}). 
\begin{proposition}
If $f: K\rightarrow K'$ is a simplicial map, then there are
elementary simplicial maps $f_i$ 
$$
K\stackrel{f_1}{\rightarrow} K_1\stackrel{f_2}{\rightarrow} K_2\cdots
\stackrel{f_n}{\rightarrow} K_n=K'
\mbox{ so that } f=f_n\circ f_{n-1}\circ\cdots \circ f_1. 
$$ 
\label{elementary}
\end{proposition}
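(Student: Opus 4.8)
The plan is to decompose an arbitrary simplicial map $f:K\rightarrow K'$ into elementary pieces by first separating the ``vertex identification'' behavior of $f$ from its ``inclusion'' behavior. A simplicial map is completely determined by its vertex map $f_V:V(K)\rightarrow V(K')$, so I would analyze $f_V$ first. The key observation is that $f_V$ factors through its image: let $V_0=f_V(V(K))\subseteq V(K')$ and let $L$ be the subcomplex of $K'$ generated by the image simplices $\{f(\sigma)\mid \sigma\in K\}$ (equivalently, the image $f(K)$, which is a genuine subcomplex of $K'$ since $f$ is simplicial). Then $f$ splits as $K\stackrel{\varphi}{\rightarrow} L\stackrel{\iota}{\hookrightarrow}K'$, where $\varphi$ is surjective and $\iota$ is an inclusion. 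The inclusion $\iota$ handles the ``new simplices'' of $K'$, while $\varphi$ carries all the collapsing.

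The heart of the argument is to realize the surjective map $\varphi:K\rightarrow L$ as a composition of elementary collapses. Here I would induct on the number of ``merged'' vertices, i.e.\ on $|V(K)|-|V(L)|$. If this quantity is zero, $\varphi_V$ is a bijection on vertices and $\varphi$ is already a (bijective) inclusion-type elementary map, so there is nothing to do. Otherwise, since $\varphi_V$ is not injective, there exist two distinct vertices $u,v\in V(K)$ with $\varphi(u)=\varphi(v)$. I would define an intermediate complex $K_1$ by identifying \emph{only} this single pair $\{u,v\}$: formally, let $g:V(K)\rightarrow (V(K)\setminus\{v\})$ send $v\mapsto u$ and fix everything else, and let $K_1=g(K)$ be the image complex. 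By construction $g:K\rightarrow K_1$ is surjective and injective off $\{u,v\}$, so it is an elementary collapse. Moreover $\varphi$ factors as $K\stackrel{g}{\rightarrow}K_1\stackrel{\varphi'}{\rightarrow}L$ where $\varphi'$ is the induced surjective simplicial map with $|V(K_1)|-|V(L)|$ strictly smaller, so the induction hypothesis applies to $\varphi'$.

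The main obstacle, and the step deserving care, is verifying that each intermediate object I construct is genuinely a simplicial complex and that the purported factorizations are well-defined simplicial maps. Specifically I must check: (i) that $g(K)$ is closed under taking faces so that it really is a simplicial complex, which follows because $g$ is a vertex map and simplicial images of complexes under vertex maps are complexes; (ii) that $g$ as defined is \emph{surjective} onto $K_1$ (immediate, since $K_1:=g(K)$) and injective on every vertex except the pair $\{u,v\}$, matching the definition of elementary collapse; and (iii) that the induced map $\varphi'$ on $K_1$ is well-defined, meaning $\varphi(u)=\varphi(v)$ guarantees $\varphi$ factors consistently through the identification. Point (iii) is exactly why I chose a pair with $\varphi(u)=\varphi(v)$ rather than an arbitrary pair. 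One subtlety is that collapsing $\{u,v\}$ might not strictly decrease the simplex count in a predictable way (two simplices can map to the same image simplex), but this does not matter: the induction is on the number of vertices merged, which strictly decreases at each elementary collapse, guaranteeing termination.

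Finally, I would assemble the pieces. Writing $\varphi=f_k\circ\cdots\circ f_1$ as the composition of elementary collapses produced by the induction, and noting that the inclusion $\iota:L\hookrightarrow K'$ can itself be decomposed into a sequence of elementary inclusions by adding the simplices of $K'\setminus L$ one at a time in an order that respects the face relation (each simplex added after all its faces are present), we obtain
$$
K\stackrel{f_1}{\rightarrow}K_1\stackrel{f_2}{\rightarrow}\cdots\stackrel{f_k}{\rightarrow}L=:K_k\stackrel{f_{k+1}}{\hookrightarrow}\cdots\stackrel{f_n}{\hookrightarrow}K_n=K',
$$
a sequence of elementary simplicial maps whose composition is $\iota\circ\varphi=f$, as required.
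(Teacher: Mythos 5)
Your proposal is correct and follows essentially the same route as the paper: factor $f$ through its image as a surjection $K\rightarrow f(K)$ followed by the inclusion $f(K)\hookrightarrow K'$, realize the surjection as a sequence of elementary collapses merging one pair of vertices with a common image at a time, and realize the inclusion as elementary inclusions added in a face-respecting order. The only cosmetic difference is that the paper first groups vertices by their common image and then splits each group collapse into pairwise collapses, whereas you induct directly on the number of merged vertices; the resulting decomposition is the same.
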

In view of Proposition~\ref{elementary}, it is sufficient to show how one
can design the persistence algorithm for an elementary simplicial
map. At this point, we make a change in the
definition~\ref{def-esimp} of elementary simplicial maps that eases further discussions. We let $\fv$ to be identity
(which is an injective map) everywhere except possibly on a 
pair of vertices $\{u,v\} \subseteq V(K)$ for
which $\fv$ maps to a single vertex, say $u$ in $K'$. 
This change can be implemented 
by renaming the vertices in $K'$ that are
mapped onto injectively. 
Since the standard persistence algorithm handles inclusions,
we focus mainly on handling the elementary collapses.

\subsection{Simulation with coning}
\label{sec:simulate}
First, we propose a simulation of simplicial maps with
a coning strategy that only requires additions of simplices.
\begin{figure}[h!]
\begin{center}
\includegraphics[width=0.45\textwidth]{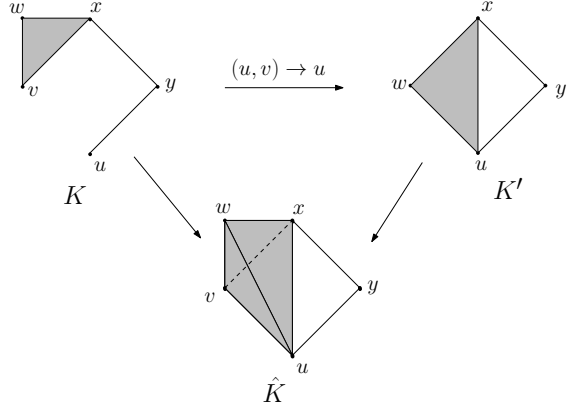}
\end{center}
\caption{Elementary collapse $(u,v)\rightarrow u$: the cone
$u*\cst v$ adds edges $\{u,w\}, \{u,v\}, \{u,x\}$, triangles
$\{u,w,x\}, \{u,v,x\},\{u,v,w\}$, and the tetrahedron
$\{u,v,w,x\}$.}
\label{star-fig}
\end{figure}
%\end{wrapfigure}
We focus on elementary collapses.
Let $f: K\rightarrow K'$ be an elementary collapse.
Assume that the induced vertex map collapses vertices $u,v \in K$ to $u \in K'$, and is identity on other vertices.
For a subcomplex $X\subseteq K$, define the cone $u*X$ to be the
complex $\{\overline{\sigma\cup \{u\}}\,| \, \sigma \in X\}$.
Consider the augmented complex
$$\hat{K} := K \cup \left(u * \cst v\right).$$
In other words, for every simplex $\{u_0, \ldots, u_d\} \in \cst v$ of $K$, we add the simplex  $\{u_0, \ldots, u_d \} \cup \{ u \}$ to $\hat{K}$ if it is not already in.
See Figure~\ref{star-fig}.
Notice that $K'$ is a subcomplex of
$\hat{K}$ in this example which we observe is true in general.

\begin{claim}
$K' \subseteq \hat{K}$.
\label{easy-claim}
\end{claim}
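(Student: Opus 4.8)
The plan is to show $K' \subseteq \hat{K}$ by proving that every simplex of $K'$ already appears as a simplex of $\hat{K} = K \cup (u * \cst v)$. Since $f: K \to K'$ is an elementary collapse that is surjective, every simplex $\tau \in K'$ is the image $f(\sigma)$ of some simplex $\sigma \in K$. So first I would fix an arbitrary $\tau \in K'$, pick a preimage simplex $\sigma \in K$ with $f(\sigma) = \tau$, and argue case-by-case on how $f$ acts on the vertices of $\sigma$, using the normalized description of the vertex map from Section~\ref{sec:simp-map}: $\fv$ is the identity everywhere except that it sends both $u$ and $v$ to $u$.

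The key case split is on whether $v \in \sigma$. If $v \notin \sigma$, then $\fv$ is the identity on every vertex of $\sigma$, so $f(\sigma) = \sigma$; in particular $\tau = \sigma \in K \subseteq \hat{K}$ and we are done. If $v \in \sigma$, write $\sigma = \sigma_0 \cup \{v\}$ where $\sigma_0$ consists of the remaining vertices of $\sigma$. Then $f(\sigma) = \fv(\sigma_0) \cup \{u\}$, and since $\fv$ is the identity off $\{u,v\}$, we have $\fv(\sigma_0) = \sigma_0$ (if $u \notin \sigma_0$) or $\fv(\sigma_0) = \sigma_0$ with $v$ absent (the value $u$ is already present either way). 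The point is that $\sigma \in \st v$ because $v \in \sigma$, hence $\sigma \in \cst v$, and therefore $\overline{\sigma \cup \{u\}} \in u * \cst v \subseteq \hat{K}$. Since $\tau = f(\sigma) = \sigma_0 \cup \{u\}$ is a face of $\sigma \cup \{u\}$, it lies in the closure $\overline{\sigma \cup \{u\}}$, and thus $\tau \in \hat{K}$.

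I would need to be slightly careful about the subcase where both $u \in \sigma$ and $v \in \sigma$, since then $f$ merges two vertices and $\tau = f(\sigma)$ has one fewer vertex than $\sigma$; but this causes no trouble because $\tau = \sigma \setminus \{v\}$ is still a face of $\sigma \cup \{u\} = \sigma$, so the same closure argument applies. The main obstacle, such as it is, lies in cleanly handling the definition of $\cst v$ (the closed star, i.e. the closure of $\st v$) and confirming that the cone construction $u * \cst v$ contains all the required faces: one must check that adding $\overline{\sigma \cup \{u\}}$ for each $\sigma \in \cst v$ really does capture every face of the form $\sigma_0 \cup \{u\}$. This is a routine consequence of the cone being closed under taking faces, so the argument reduces to the bookkeeping above. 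Since $\tau$ was arbitrary, we conclude $K' \subseteq \hat{K}$.
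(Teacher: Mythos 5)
Your proof is correct and follows essentially the same route as the paper's: split on whether the simplex of $K'$ involves the collapsed vertex, handle the identity case by $K\subseteq\hat K$, and handle the other case by locating a preimage in $\cst v$ so that the cone $u*\cst v$ supplies the needed simplex as a face. The only cosmetic difference is that you case on an arbitrary preimage $\sigma$ (explicitly treating the subcase $u,v\in\sigma$), whereas the paper chooses a minimal preimage $\{u_0,u_1,\ldots,u_d\}$ with $u_0\in\{u,v\}$ up front; both are valid.
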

\begin{proof}
For a simplex $\sigma \in K'$ that does not contain $\newfixpt$, $f$ is identity on its unique pre-image; that is, $\sigma \in K \subseteq \hat{K}$.
Now consider a $d$-simplex $\sigma=\{\newfixpt, u_1, \ldots, u_d \}\in K'$.
Since $f$ is surjective, there exists at least one pre-image of $\sigma$ in $K$ of the form $\sigma'=\{u_0, u_1, \ldots, u_d\}$, where  $u_0$ is either $u$ or $v$. If it is $u_0=u$, we have $f(\sigma')=\sigma'=\sigma$ and thus $\sigma\in K\subseteq \hat{K}$. So, assume that
$u_0=v$.
This means that the simplex $\{u_1, \ldots, u_d\}$ is in $\Lk v$ (and thus in $\cst v$), implying that $\sigma=\{\newfixpt, u_1, \ldots, u_d\} \in \hat{K}$.
\end{proof}

\begin{wrapfigure}{r}{1in}
\input{commute.pstex_t}
\end{wrapfigure}
Now consider the canonical inclusions $i:K \hookrightarrow \hat{K}$
and $i': K'\hookrightarrow \hat{K}$. These inclusions constitute the
diagram on the righthand side which does not necessarily commute.
Nevertheless, it commutes at the homology level which
is precisely stated below.\\

\begin{table*}
\begin{center}
\begin{tabular}{c}
$$
\xymatrix
{
H_*(K_1) \ar@{->}[r]^{f_{1_*}} \ar@{->}[d]^{=}
& \  H_*(K_2) \  \ar@{->}[r]^{=}\ar@{->}[d]^{\mysimeq}
& \ H_*(K_2) \  \ar@{->}[r]^{=}\ar@{->}[d]^{=}
& \ H_*(K_2) \   \ar@{<-}[r]^{f_{2_*}}\ar@{->}[d]^{\mysimeq}
& \ H_*(K_3) \  \ar@{->}[r]^{f_{3_*}}\ar@{->}[d]^{=}
& \ \ldots \ldots \ar@{->}[r] \
& \  H_*(K_m) \ar@{->}[d]^{=}
\\
H_*(K_1) \ar[r]^{i_{1_*}}
& \ H_*(\hat{K}_1) \  \ar[r]^{\mysimeq}
& \ H_*(K_2)\  \ar[r]^{\mysimeq}
& \ H_*(\hat{K}_3)\   \ar@{<-}[r]^{i_{2_*}}
& \ H_*(K_3)\ \ar@{->}[r]^{i_{3_*}}
& \ \ldots \ldots\  \ar@{->}[r]
& \ H_*(K_m)
}
$$
\end{tabular}
\label{comm:diag}
\end{center}
\end{table*}
\begin{proposition}
$f_*: H_*(K)\rightarrow H_*(K')$ is equal to
$(i'_*)^{-1}\circ i_*$
where $i'_*$ is an
isomorphism and $H_*(K)\stackrel{i_*}{\rightarrow} H_*(\hat{K})
\stackrel{i'_*}{\leftarrow}H_*(K')$.
\label{zigzag}
\end{proposition}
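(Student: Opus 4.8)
The plan is to exhibit an explicit homotopy inverse to $i'$ and then read off both assertions. The natural candidate is the simplicial map $g\colon \hat{K}\rightarrow K'$ that extends $f$, sending $v$ to $u$ and fixing every other vertex. First I would check that $g$ is genuinely simplicial, i.e. that it carries each simplex of $\hat{K}$ to a simplex of $K'$. Simplices of $\hat{K}$ come in two flavors: those already in $K$, on which $g$ agrees with $f$ and hence lands in $K'$ by hypothesis; and faces of cone simplices $\sigma\cup\{u\}$ with $\sigma\in\cst v$, which a short case analysis (according to whether $v\in\sigma$) sends into $K'$, using that $f$ is simplicial on $K$. By construction $g\circ i = f$ and $g\circ i'=\id_{K'}$ as simplicial maps, so on homology $g_*\circ i_*=f_*$ and $g_*\circ i'_*=\id$.

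It remains to prove that $i'_*\circ g_*=\id$ on $H_*(\hat{K})$, which together with the previous identity makes $i'_*$ an isomorphism with inverse $g_*$; substituting into $g_*\circ i_*=f_*$ then yields $f_*=(i'_*)^{-1}\circ i_*$, the desired formula. I would obtain $i'_*\circ g_*=\id$ by showing that the self-map $i'\circ g$ of $\hat{K}$ is homotopic to $\id_{\hat{K}}$. The cleanest route is \emph{contiguity}: two simplicial self-maps that agree up to sliding within a common simplex induce the same map on homology, so it suffices to verify that $\sigma\cup g(\sigma)\in\hat{K}$ for every simplex $\sigma\in\hat{K}$. When $v\notin\sigma$ this is immediate since $g(\sigma)=\sigma$; when $v\in\sigma$ we have $\sigma\cup g(\sigma)=\sigma\cup\{u\}$, and this is exactly the simplex that the cone $u*\cst v$ was added to guarantee.

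The heart of the argument, and the step I expect to be the main obstacle, is therefore this contiguity check in the case $v\in\sigma$, i.e. showing $\sigma\cup\{u\}\in\hat{K}$. I would split it by where $\sigma$ lives: if $\sigma\in K$ contains $v$ then $\sigma\in\st v\subseteq\cst v$, so $\sigma\cup\{u\}$ is a cone simplex and lies in $\hat{K}$; if instead $\sigma$ is a face of some cone simplex $\tau\cup\{u\}$, then $\sigma\cup\{u\}$ is again a face of $\tau\cup\{u\}\in\hat{K}$. This is precisely the geometric content of the coning construction: adding $u*\cst v$ makes room for the linear homotopy that drags $v$ onto $u$ while keeping every intermediate simplex inside $\hat{K}$. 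One can equivalently phrase this by observing that $\hat{K}$ is a simplicial mapping cylinder of $f$ and that such a cylinder deformation retracts onto its target $K'$; the contiguity formulation merely avoids setting up the cylinder machinery explicitly.
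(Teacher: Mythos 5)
Your proof is correct and follows essentially the same route as the paper: your map $g$ is exactly the projection $\pi:\hat{K}\rightarrow K'$ that the paper constructs in Proposition~\ref{inclu-iso}, and your contiguity check of $i'\circ g$ against $\id_{\hat{K}}$ (reducing to $\sigma\cup\{u\}\in\hat{K}$ when $v\in\sigma$, which is what the cone $u*\cst v$ guarantees) is the same computation the paper performs there, together with the observation $g\circ i'=\id_{K'}$. The one genuine economy in your version is worth noting: the paper relates $f_*$ to $i_*$ by a \emph{second} contiguity argument (Proposition~\ref{contig}, that $i$ and $i'\circ f$ are contiguous, giving $i_*=i'_*\circ f_*$), whereas you read off $f_*=g_*\circ i_*$ from the exact simplicial identity $f=g\circ i$, which holds on the nose since $g$ restricted to $V(K)$ is the vertex map of $f$. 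This makes the analogue of Proposition~\ref{contig} unnecessary and concentrates all the homotopy-theoretic content in the single contiguity $i'\circ g\simeq\id_{\hat{K}}$; the paper's arrangement, on the other hand, keeps the statement ``$i$ and $i'\circ f$ induce the same map'' explicit, which is the form in which the commutativity of the squares is later invoked in Theorem~\ref{thm:FM}. Both yield $i'_*$ invertible with inverse $g_*$ and hence $f_*=(i'_*)^{-1}\circ i_*$.
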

\begin{proof}
We use the
notion of contiguous maps which induces equal maps
at the homology level. Two maps $f_1:K_1\rightarrow K_2$,
$f_2: K_1\rightarrow K_2$ are contiguous if for every simplex
$\sigma\in K_1$, $f_1(\sigma)\cup f_2(\sigma)$ is a
simplex in $K_2$. We observe that the simplicial maps
$i'\circ f$ and $i$ are contiguous and
$i'$ induces an isomorphism at the homology level,
that is, $i'_*: H_*(K)\rightarrow H_*(\hat{K})$
is an isomorphism.

Since $i$ is contiguous to $i'\circ f$ (Proposition A.1 in appendix),
we have $i_*=(i'\circ f)_*= i'_* \circ f_*$. Since $i'_*$ is
an isomorphism (Proposition A.2 in appendix),
$(i'_*)^{-1}$ exists and is an isomorphism. It then follows that
$f_*=(i'_*)^{-1} \circ i_*$.
\end{proof}
~\\

Proposition~\ref{zigzag} allows us to simulate the persistence of
a sequence of simplicial maps with only inclusion-induced homomorphisms.
%\subsection{Zigzag persistence of simplicial maps}
%\label{sec:zigzag}
%\vspace*{0.1in}
Consider the following sequence of simplicial complexes connected with
a zigzag sequence of simplicial maps (the arrows could be
oriented differently in general):
%$$
%K_1 \stackrel{f_1}{\leftrightarrow} K_2
%\stackrel{f_2}{\leftrightarrow}\cdots \stackrel{f_{n-1}}{\leftrightarrow}K_n
%$$
%where the arrows $\leftrightarrow$ can either be a forward arrow $\rightarrow$
%or a backward arrow $\leftarrow$ resulting into a {\em zigzag sequence}.
%For specificity, assume that we have a zigzag sequence:
$$
K_1 \stackrel{f_1}{\rightarrow} K_2
\stackrel{f_2}{\leftarrow} K_3 \stackrel{f_3}{\rightarrow}\cdots
\rightarrow K_n
$$
which generates the module at the homology level by induced
homomorphisms ${f_i}_*$
$$
{\cal F}:
H_*(K_1) \stackrel{{f_1}_*}{\rightarrow} H_*(K_2)
\stackrel{{f_2}_*}{\leftarrow} H_*(K_3) \stackrel{{f_3}_*}{\rightarrow}\cdots
\rightarrow H_*(K_n)
$$
When the maps $f_i$ are all inclusions,
it is known that the zigzag persistence induced by them can be computed
in matrix multiplication time by a recent algorithm of~\cite{NMS11}. 
This algorithm does not extend to simplicial maps as per se though
we know that a persistence module induced by simplicial maps
admits a decomposition~\cite{ZC05} and hence a persistence diagram
~\cite{CEH07}. With our
observation that every simplicial map can be simulated with inclusion
maps, we can take advantage of the algorithm of~\cite{NMS11} for
computing zigzag persistence for simplicial maps.
In view of Proposition~\ref{zigzag}, consider the following sequence
connected only with inclusions:
$$
K_1 \hookrightarrow \hat{K_1} \hookleftarrow K_2
\hookrightarrow \hat{K_3} \hookleftarrow K_3\hookrightarrow\cdots
\hookleftarrow K_n
$$

At the homology level we have $H_*(\hat{K}_i)\mysimeq H_*(K_{i+1})$
induced by the inclusion $\hat{K}_i\hookleftarrow K_{i+1}$ and also
$H_*(K_i)\mysimeq H_*(\hat{K}_{i+1})$ induced by the inclusion
$K_i \hookrightarrow \hat{K}_{i+1}$.
Thus, we have the following persistence module:
%Therefore, the zigzag persistence of ${\cal F}$ can be read from the persistence of the module
\begin{eqnarray*}
{\cal M}: H_*(K_1)& \stackrel{i_*}{\rightarrow}& H_*(\hat{K_1})\\
& \mysimeq& H_*(K_2)\\
& \mysimeq& H_*(\hat{K_3})\stackrel{i_*}{\leftarrow} H_*(K_3)\stackrel{i_*}{\rightarrow}
\cdots
\stackrel{i_*}{\leftarrow} H_*(K_n)
\end{eqnarray*}

\begin{theorem}
The persistence diagram of ${\cal F}$ can be derived from the
that of the module ${\cal M}$.
\label{thm:FM}
\end{theorem}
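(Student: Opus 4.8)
The plan is to read the commutative ladder displayed above as an isomorphism of zigzag modules and then to quote the uniqueness of interval decompositions. The top row of that ladder is $\mathcal{F}$ with extra identity maps inserted, the bottom row is $\mathcal{M}$, and both are representations of one and the same type-$A$ quiver $Q$: they have equal length and identical arrow orientations.

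First I would check that every square of the ladder commutes. The squares whose horizontal arrows are identities or inclusion-induced isomorphisms commute trivially; the only nontrivial squares are those carrying a simplicial map $f_{i_*}$, and there commutativity is exactly the relation $i_* = i'_* \circ f_{i_*}$ obtained by rewriting $f_{i_*} = (i'_*)^{-1} \circ i_*$ from Proposition~\ref{zigzag}. Since each vertical map is either an identity or one of the isomorphisms $i'_*$ of Proposition~\ref{zigzag}, every vertical map is an isomorphism. The ladder is therefore an isomorphism between its two rows, viewed as representations of $Q$.

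Next I would invoke the structure theorem. A representation of a type-$A$ quiver over the field $\mathbb{Z}_2$ decomposes uniquely, up to reordering of summands, into interval modules~\cite{ZC05}, and this multiset of intervals is a complete isomorphism invariant recorded by the persistence diagram~\cite{CEH07}. An isomorphism of representations carries the interval decomposition of one to that of the other summand by summand, so the two rows of the ladder have identical persistence diagrams; in particular $\mathrm{Dgm}(\mathcal{M})$ equals the diagram of the padded copy of $\mathcal{F}$.

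It remains to strip off the padding, which is the only genuine bookkeeping. Each inserted horizontal arrow --- joining either a repeated term by an identity or a coning term $H_*(\hat{K}_j)$ by the isomorphism $i'_*$ --- is an isomorphism, and no interval summand can cover exactly one endpoint of an isomorphism arrow, since that would make the induced linear map fail to be injective or surjective there. Hence an interval covers an inserted index precisely when it covers the neighbour joined to that index by the isomorphism arrow; collapsing every inserted index onto that neighbour therefore carries the barcode of $\mathcal{M}$ bijectively onto the barcode of $\mathcal{F}$, and this collapse is exactly the rule that derives $\mathrm{Dgm}(\mathcal{F})$ from $\mathrm{Dgm}(\mathcal{M})$. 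I expect this last correspondence to be the main obstacle: one must verify, position by position and in particular at the zigzag turns where an inserted term sits at a peak or a valley, that the isomorphism arrow forces an abutting interval to extend across the inserted index, so that the collapsed endpoints are unambiguous and the induced map on diagrams is a true bijection rather than merely a surjection.
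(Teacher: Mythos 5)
Your proposal takes essentially the same route as the paper: verify that every square of the ladder commutes (trivially for the squares supported only by isomorphisms, and via Proposition~\ref{zigzag} for the squares carrying $f_{i_*}$), and then conclude that the two rows, as zigzag modules of the same shape connected by vertical isomorphisms, have the same interval decomposition. The paper delegates the remaining steps---uniqueness of the interval decomposition and the removal of the padded indices---to the citations \cite{CS10,EH09}, whereas you carry them out explicitly; in particular your observation that no interval summand can terminate at exactly one endpoint of an isomorphism arrow is the correct justification for collapsing the inserted terms, so your argument is a valid, more detailed rendering of the paper's proof.
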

\begin{proof}
Consider the diagram between vector spaces as shown above.
All isomorphisms are induced by inclusions, hence every square
being supported only by isomorphisms commutes.
The other squares supported by $f_{i_*}$ also
commute because of Proposition~\ref{zigzag}.
Hence every square in this diagram commutes, and
the claim follows~\cite{CS10,EH09}.
\end{proof}

%\section{Simplicial maps and annotations}
\section{Annotations}
\label{SEC:ANNOTATIONS}
When we are given a \emph{non-zigzag} sequence of 
simplicial maps
$
K_1\stackrel{f_1}{\rightarrow} K_2\stackrel{f_2}{\rightarrow} K_3\cdots
\stackrel{f_n}{\rightarrow} K_n
$
we can improve upon the coning approach by reducing simplex
insertions as illustrated in Figure~\ref{fig:compare}. 
Consider the map $f_{ij}: K_i \rightarrow K_j$ where
$f_{ij}= f_{j-1}\circ \cdots \circ f_{i+1}\circ f_i$. 
%The persistent homology
%is given by the image of the induced homomorphism
%$f_{ij_*}: H_*(K_i)\rightarrow H_*(K_j)$. 
To compute the
persistent homology, the persistence algorithm essentially
maintains a consistent basis by computing the
image $f_{ij_*}(B_i)$ of a basis
$B_i$ of $H_*(K_i)$. 
As one moves through a map in the 
filtration, the homology basis elements get
created (birth) or can be interpreted to be destroyed (death).
The notion of this birth and death 
of the homology basis elements can be formulated precisely
with algebra~\cite{ZC05} and can be summarized with 
persistence diagrams~\cite{CEH07}.
Here, instead of a consistent homology basis, we maintain a consistent
cohomology basis, that is, if $B^i$ is a cohomology basis
of $H^*(K_i)$ maintained by the algorithm, we compute
the preimage $f_{ij}^{*-1}(B^i)$ where $H^*(K_i)
\stackrel{f_{ij}^*}{\leftarrow} H^*(K_j)$
is the homomorphism induced in the cohomology groups by $f_{ij}$.
By duality, this implicitly maintains a consistent homology basis
and thus captures all information about persistent homology 
as well~\cite{DMV11}.

Our main tool to maintain a consistent cohomology basis is the
notion of annotation~\cite{BCCDW12} which 
are binary vectors assigned to simplices.
We maintain the annotations as we go
forward through the given sequence, and thus maintain a cohomology
basis in the reverse direction whose
birth and death coincide with the death and birth respectively of a 
consistent homology basis. 

\begin{definition}
Given a simplicial complex $K$, Let $K(p)$ denote the set of
$p$-simplices in $K$. An annotation for $K(p)$ is an
assignment $\ann: K(p)\rightarrow \mathbb{Z}_2^g$
of a binary vector $\ann_{\sigma}=\ann(\sigma)$ of same length $g$
for each $p$-simplex $\sigma \in K$. Entries
of $\ann_\sigma$ are called its elements. We also have an induced
annotation for any $p$-chain $c_p$ given by
$\ann_{c_p}= \Sigma_{\sigma\in c_p} \ann_{\sigma}$.
\end{definition}

\begin{definition}
An annotation $\ann:K(p)\rightarrow \mathbb{Z}_2^g$ is {\em valid} if 
conditions 1 and 2 are satisfied:
\begin{enumerate}
\item $g=\rank\, H_p(K)$, and
\item two $p$-cycles $z_1$ and $z_2$  have 
$\ann_{z_1}=\ann_{z_2}$ iff their homology
classes are identical, i.e. $[z_1]=[z_2]$.
\end{enumerate}
\end{definition}

%In~\cite{BCCDW12}, it was shown that valid annotations are 
%computable efficiently.

\begin{proposition}
Statements 1 and 2 are equivalent: 
\begin{enumerate}
\item An annotation $\ann: K(p)\rightarrow \mathbb{Z}_2^g$ 
is valid
\item The cochains $\{\phi_i\}_{i=1,\cdots,g}$
given by $\phi_i(\sigma) = \ann_{\sigma}[i]$ for all $\sigma\in K(p)$
are cocycles whose cohomology classes $\{[\phi_i]\},
i=1,\ldots,g$ constitute a basis of $H^p(K)$.
\end{enumerate}
\label{annot-cohom}
\end{proposition}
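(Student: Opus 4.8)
The plan is to prove the equivalence by first making explicit the dictionary between the annotation and the coordinate cochains, and then reducing everything to the nondegenerate pairing between $H_p(K)$ and $H^p(K)$ over the field $\mathbb{Z}_2$. The first step is the observation that, by the definition of the induced annotation, $\ann_{c_p}[i] = \phi_i(c_p)$ for every $p$-chain $c_p$, so the assignment $c_p \mapsto \ann_{c_p}$ is exactly the linear map $c_p \mapsto (\phi_1(c_p), \ldots, \phi_g(c_p))$ into $\mathbb{Z}_2^g$. Because this map is linear, validity condition~2 can be restated purely in terms of the cycle group: for a $p$-cycle $z$ one has $\ann_z = 0$ if and only if $z \in B_p$. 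Equivalently, the restriction $\ann|_{Z_p}$ has kernel exactly $B_p$ and hence descends to an \emph{injective} map $\overline{\ann}\colon H_p(K) \to \mathbb{Z}_2^g$; together with condition~1 ($g = \rank H_p(K)$) this injection becomes an isomorphism.

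For the direction (1)~$\Rightarrow$~(2), I would first check each $\phi_i$ is a cocycle: a cochain is a cocycle precisely when it vanishes on $B_p$, and condition~2 (applied to boundaries, which are cycles homologous to $0$) gives $\ann_b = 0$, i.e.\ $\phi_i(b)=0$, for every $b \in B_p$. To see that the classes $[\phi_i]$ form a basis of $H^p(K)$, I would use that over the field $\mathbb{Z}_2$ one has $\dim H^p(K) = \dim H_p(K) = g$, so it suffices to prove linear independence. A dependence $\sum_i c_i [\phi_i] = 0$ would mean $\sum_i c_i \phi_i$ is a coboundary and therefore vanishes on all of $Z_p$; but surjectivity of $\overline{\ann}$ lets me realize any standard basis vector of $\mathbb{Z}_2^g$ as $\ann_z$ for some $z \in Z_p$, which forces every $c_i$ to vanish.

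For the converse (2)~$\Rightarrow$~(1), condition~1 is immediate since a basis of $H^p(K)$ of size $g$ gives $g = \dim H^p(K) = \dim H_p(K) = \rank H_p(K)$. For condition~2, the cocycle property gives $\phi_i|_{B_p} = 0$, so $B_p \subseteq \ker(\ann|_{Z_p})$; the reverse inclusion is where the pairing enters. Each cocycle $\phi_i$ induces a functional $\overline{\phi_i}\colon H_p(K) \to \mathbb{Z}_2$, $[z] \mapsto \phi_i(z)$, well defined because cocycles vanish on boundaries. Under the perfect pairing $H^p(K) \times H_p(K) \to \mathbb{Z}_2$ that identifies $H^p(K)$ with $H_p(K)^\ast$, the hypothesis that $\{[\phi_i]\}$ is a basis of $H^p(K)$ translates to $\{\overline{\phi_i}\}$ being a basis of $H_p(K)^\ast$; so if $\ann_z = 0$, i.e.\ every $\overline{\phi_i}$ annihilates $[z]$, then $[z]=0$ and $z \in B_p$.

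The main obstacle I anticipate is getting the duality step clean: I must argue carefully that ``the cohomology classes $[\phi_i]$ form a basis of $H^p(K)$'' is equivalent to ``the induced functionals $\overline{\phi_i}$ form a basis of $H_p(K)^\ast$''. This rests on the fact that over the field $\mathbb{Z}_2$ the evaluation pairing $([\phi],[z]) \mapsto \phi(z)$ is well defined (cocycles vanish on boundaries, coboundaries vanish on cycles) and nondegenerate, giving the canonical isomorphism $H^p(K) \cong H_p(K)^\ast$. Once this identification is in hand, both directions become finite-dimensional linear algebra combined with the two elementary facts that cocycles annihilate $B_p$ and coboundaries annihilate $Z_p$.
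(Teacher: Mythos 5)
Your proof is correct and follows essentially the same route as the paper's: both directions reduce to the nondegenerate evaluation pairing $H^p(K)\times H_p(K)\to\mathbb{Z}_2$ (the universal coefficient theorem over a field), with validity of the annotation translating into the full rank of the matrix $[\phi_i(z_j)]_{ij}$, equivalently the injectivity of the induced map $H_p(K)\to\mathbb{Z}_2^g$. Your write-up is somewhat more explicit than the paper's about why cocycles annihilate $B_p$ and coboundaries annihilate $Z_p$, but the underlying argument is the same.
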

\begin{proof}
$1\rightarrow 2$: The cochains $\phi_i$ are cocycles since for any
$(p+1)$-simplex $\tau\in K(p+1)$ one has 
$[\partial \tau]=[0]$ and hence $\delta_p\phi_i(\tau) = \phi_i(\partial\tau)=\phi_i(0)=0$, where $\delta_p$ is the co-boundary operator for $p$-dimensional co-chains. 
Let $[z_1],[z_2],\cdots,[z_g]$ be a basis of $H_p(K)$. Let $V$ be the
vector space generated by $[\phi_i]$, $i=1,\cdots,g$. Define a 
bilinear form $\alpha: V\times H_p(K)\rightarrow \mathbb{Z}_2$ by
$\alpha([\phi_i],[z_j])=\phi_i(z_j)$. The matrix
$[\phi_i(z_j)]_{ij}$ has full rank due to the condition 2 in the definition
of annotation. This means the vector spaces $V$ and $H_p(K)$ have
the same rank and hence are isomorphic. It follows that $V\mysimeq H^p(K)$.

$2\rightarrow 1$: For this direction, consider a basis $[z_1],[z_2],\cdots,[z_g]$ of
$H_p(K)$. By universal coefficient theorem we have an isomorphism
$H^p(K)\mysimeq \Hom (H_p(K),\Z_2)$ which sends a cocycle
class 
$[\phi_i]$ to the homomorphism
$[z_j]\mapsto \phi_i(z_j)$.
%and a bilinear form $\alpha: H^p(K)\times H_p(K)$ as above.
This means that the matrix $[\phi_i(z_j)]_{ij}$
has full rank and hence the vectors $[\phi_1(z_j),\ldots,\phi_g(z_j)]$
and $[\phi_1(z_k),\ldots,\phi_g(z_k)]$ are identical if and only if 
$[z_i]=[z_k]$. 
The claim can be extended to any homology
class since it can be expressed as a linear
combination of the basis elements. 
\end{proof}
~\\
 
In light of the above result, an annotation is simply one way to represent 
a cohomology basis. 
However, by representing the corresponding basis as an explicit 
vector associated with each simplex, it localizes the basis to each simplex. 
As a result, we can update the cohomology basis locally by changing the annotations locally (see Proposition \ref{annot-push}). 
This point of view also helps to reveal how we can process elementary collapses, which are neither inclusions nor deletions, by transferring 
annotations (see Proposition \ref{claim:transferresult}). 

%In our algorithm, elementary collapses need to modify annotations 
%in the current simplicial complex. The result below helps us to 
%transfer annotations among simplices without affecting cohomology bases. 

\begin{figure*}[ht!]
\centering
\input{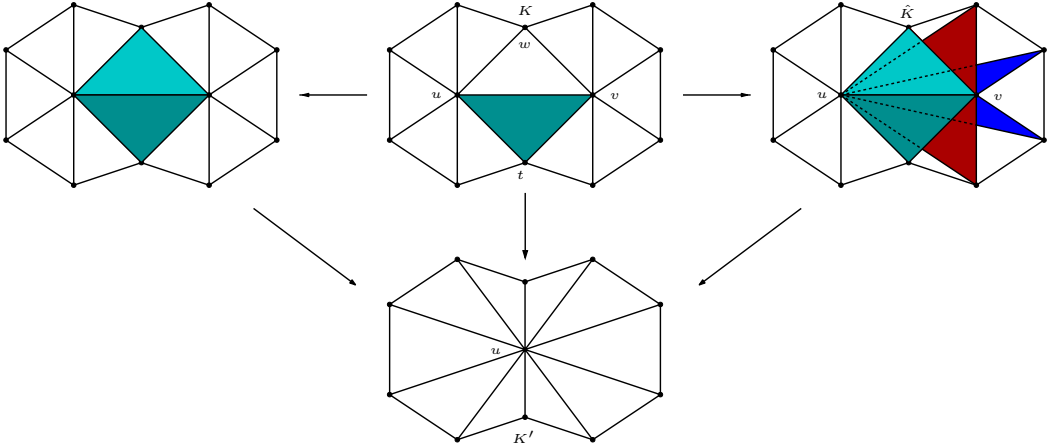}
\caption{Annotation vs. coning: The pair $(u,v)$ is collapsed
to $u$ in $K$ to produce $K'$ (middle column). The $2$-simplices are the shaded triangles
alone. Annotation requires inserting (implicitly)
the single triangle as shown on the left
whereas coning requires inserting many more simplices as shown by shaded triangles on the right.
Specifically, the coning approach requires inserting all simplices in the cone $u * \cst{v}$ formed by $u$ and all simplices in the closure of the star of $v$.}
\label{fig:compare}
\end{figure*}

\section{Algorithm}
\label{SEC:ALGORITHM}
Consider the persistence module
${\cal M}$ induced by elementary
simplicial maps $f_i: K_i\rightarrow K_{i+1}$.
$$
{\cal M}: \,\, H_*(K_1)\stackrel{f_{1_*}}{\rightarrow} H_*(K_2)\stackrel
{f_{2_*}}{\rightarrow} H_*(K_3)\cdots
\stackrel{f_{n_*}}{\rightarrow} H_*(K_n)
$$
Instead of tracking a consistent homology basis for the module ${\cal M}$,
we track a cohomology basis in the dual module ${\cal M}^*$
where the homomorphisms are in reverse direction:
$$
{\cal M}^*: \,\, H^*(K_1)\stackrel{f^*_1}{\leftarrow} H^*(K_2)\stackrel
{f^*_{2}}{\leftarrow} H^*(K_3)\cdots
\stackrel{f^*_{n}}{\leftarrow} H^*(K_n)
$$
%It is known that the persistence diagram of ${\cal M}$ and ${\cal M}^*$
%are identical~\cite{DMV11}. Therefore, instead of tracking a homology basis,
%we can track a cohomology basis for ${\cal M}^*$ which we precisely
%achieve by annotations.
As we move from left to right in the above sequence, 
the annotations implicitly maintain a cohomology basis
whose elements are also {\em time stamped} to signify when a basis 
element is born or dies.
We should keep in mind that the \emph{birth} and \emph{death} of a cohomology basis element
coincides with the \emph{death} and \emph{birth} of a homology basis element because the
two modules run in opposite directions.

\subsection{Elementary inclusion} 
\begin{figure*}[ht!]
\begin{center}
\begin{tabular}{c||c}
\includegraphics[width=0.36\textwidth]{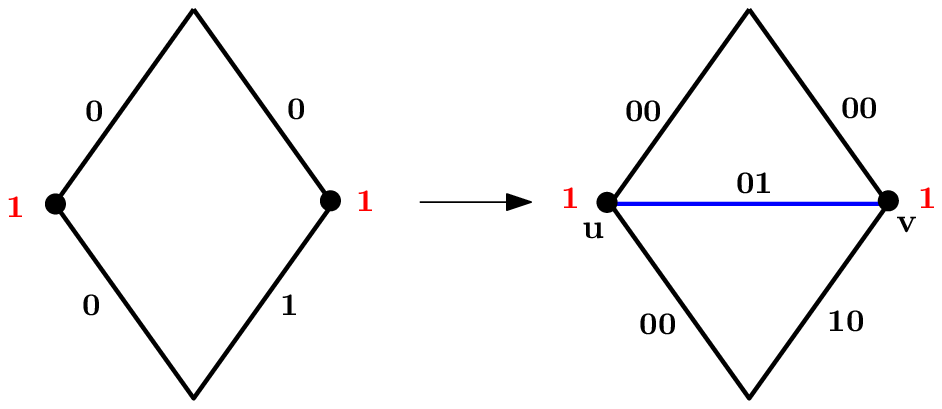}&
\includegraphics[width=0.54\textwidth]{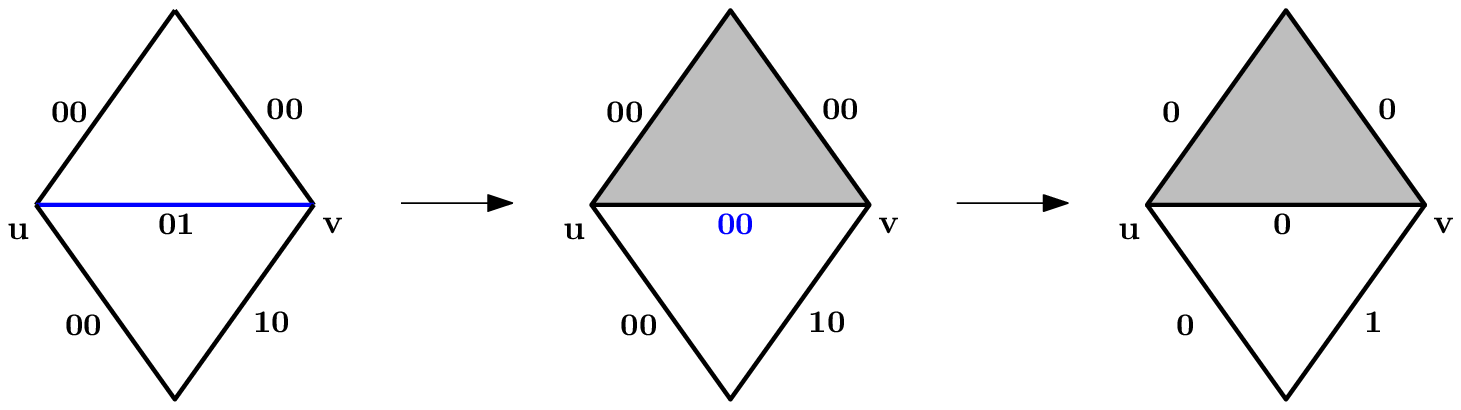}\\
(a) Case(i) & (b) Case(ii)
\end{tabular}
\end{center}
\caption{Case(i) of inclusion: the boundary $\partial uv=u+v$ of the edge
$uv$ has annotation $1+1=0$. After its addition,
every edge gains an element in its annotation which is $0$ for all 
except the edge $uv$.
Case (ii) of inclusion: the boundary of the top triangle has annotation
$01$. It is added to the annotation of $uv$ which is the only edge having
the second element $1$. Consequently the second 
element is zeroed out for every edge, and
is deleted.}
\label{inclusion-annot}
\end{figure*}
The handling of elementary inclusions using annotations can be 
viewed as an alternative formulation of the algorithm 
proposed in \cite{DMV11}; 
%with the annotations corresponding to the optimization used in that algorithm in maintaining basis; 
see also \cite{DMV11b}. 
We describe it in terms of the annotation here 
because it is also used in an elementary collapse, a new atomic
operation that we need to address. %of presentation. 
Consider an elementary inclusion $K_i \hookrightarrow K_{i+1}$.
Assume that $K_i$ has a valid annotation. We describe how we obtain
a valid annotation for $K_{i+1}$ from that of $K_i$ after
inserting the $p$-simplex 
$\sigma= K_{i+1}\setminus K_i$. We compute the
annotation $\ann_{\partial \sigma}$ for the boundary $\partial\sigma$
in $K_i$ and take actions as follows. A formal justification
is provided in Section~\ref{SEC:JUSTIFICATION}.\\

\noindent
Case (i): If $\ann_{\partial\sigma}$ is a zero vector, the class
$[\partial\sigma]$
is trivial in $H_{p-1}(K_i)$. This means 
$\sigma$ creates a $p$-cycle in $K_{i+1}$ and by duality a $p$-cocycle
is killed while going left from $K_{i+1}$ to $K_i$. 
In this case we augment the annotations for all $p$-simplices by one
element with a time stamp $i+1$, that is, an
annotation $[b_1,b_2,\cdots,b_g]$ for a $p$-simplex $\tau$ is updated to
$[b_1,b_2,\cdots,b_g,b_{g+1}]$ with the last element time stamped $i+1$ where
$b_{g+1}=0$ for $\tau\not= \sigma$ and
$b_{g+1}=1$ for $\tau=\sigma$. 
The element $b_i$ of $\ann_\sigma$ is set to zero for
$1\leq i \leq g$.
 Other annotations for other simplices remain unchanged. See Figure~\ref{inclusion-annot}(a).\\

\noindent
Case (ii): If $\ann_{\partial\sigma}$ is not a zero vector,
the class of the $(p-1)$-cycle $\partial\sigma$ is nontrivial in 
$H_{p-1}(K_i)$. Therefore, $\sigma$ kills the class of this cycle and
a corresponding dual class of cocycles is born in the
reverse direction. We simulate
it by forcing $\ann_{\partial\sigma}$ to be zero which affects
other annotations as well. Let $i_1, i_2,\cdots,i_k=u$ be the set of
indices in non-decreasing order so that $b_{i_1},b_{i_2},\cdots,b_{i_k}=b_u$
are all of the nonzero elements 
in $\ann_{\partial\sigma}=[b_1,b_2,\cdots, b_u,\cdots, b_g]$.
The cocycle $\phi=\phi_{i_1}+\phi_{i_2}+\cdots+(\phi_{i_k}=\phi_u)$ 
should become a coboundary 
after the addition of $\sigma$, which renders 
$$\phi_u=\phi_{i_1}+\phi_{i_2}+\cdots+\phi_{i_{k-1}}.$$ 
We make the latest cocycle
$\phi_u$ to be dependent on others. In other words, the cocycle class $[\phi]$
which is born at the time $i+1$ is chosen to be killed at time when $b_u$ was
introduced.
%%%%%%added by fengtao%%%%%%
This pairing matches that of the standard persistence algorithm 
where the youngest basis element is always paired first.
%%%%%%%%%%
%To simulate the purport of the above equation, we replace the
%class $[\phi_u]$ with the class 
%$[\phi_{i_1}+\phi_{i_2}+\cdots+\phi_{i_{k-1}}]$ and
%delete the class $[\phi_u]$ from the cocycle basis.
We add the vector $\ann_{\partial\sigma}$ 
to all annotations of $(p-1)$-simplices whose $u$th element is nonzero.
This zeroes out the $u$th element of all annotations of $(p-1)$-simplices.
We simply delete this element from all such annotations. 
See Figure~\ref{inclusion-annot}(b).

Notice that determining if we have case (i) or (ii) can be done 
easily in $O(pg)$ time by checking the annotation of $\partial \sigma$. 
Indeed, this is achieved because the annotation already localizes 
the co-homology basis to each individual simplex. 

\subsection{Elementary collapse}
The case for handling collapse is more interesting. 
It has three distinct steps, (i) 
elementary inclusions to satisfy the so called link
condition, (ii) local annotation transfer to prepare
for the collapse, and (iii) collapse of the simplices with
updated annotations. We explain each of these steps now.

The elementary inclusions that may precede the final collapse
are motivated by a result that connects collapses with the change
in (co)homology.
Consider an elementary collapse  $K_i\stackrel{f_i}{\rightarrow} K_{i+1}$
where the vertex pair $(u,v)$ collapses to $u$. 
The following link condition, introduced in~\cite{DEGN99} and later
used to preserve homotopy~\cite{ALS11}, becomes relevant.
\begin{definition}
A vertex pair $(u,v)$ in a simplicial
complex $K_i$  satisfies the {\em link condition} if the edge
$\{u,v\}\in K_i$, and $\Lk u \cap \Lk v= \Lk \{u,v\}$.
An elementary collapse $f_i: K_i\rightarrow K_{i+1}$
satisfies the link condition if the vertex pair on which
$f_i$ is not injective satisfies the link condition.
\end{definition}

\begin{figure*}[ht!]
\begin{center}
\begin{tabular}{c}
\includegraphics[width=0.59\textwidth]{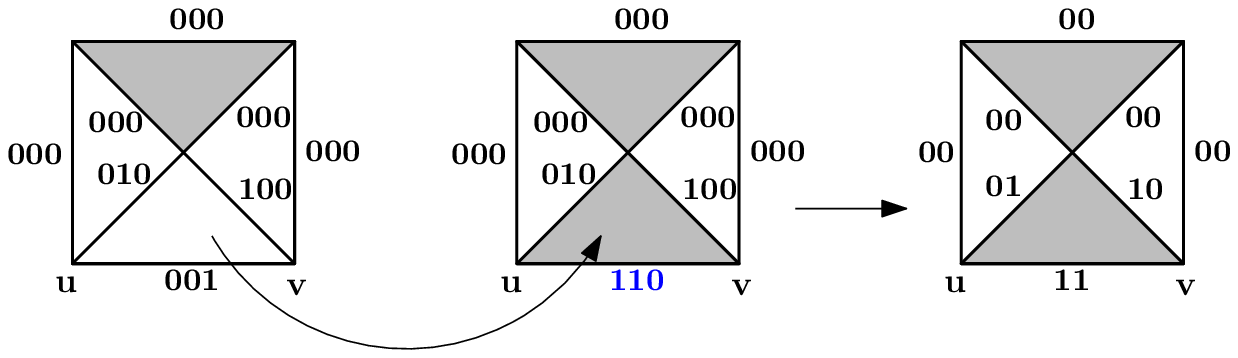}\\
\includegraphics[width=0.58\textwidth]{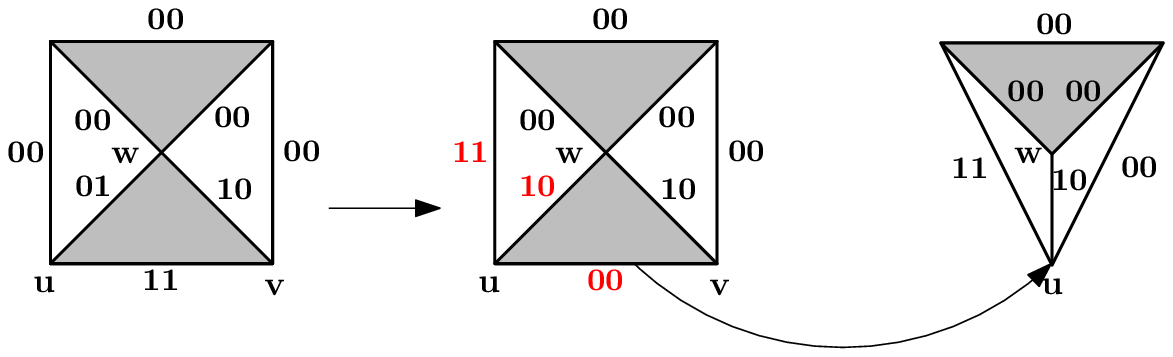}
\end{tabular}
\end{center}
\caption{Annotation updates for elementary collapse: inclusion of a triangle
to satisfy the link condition (upper row), annotation transfer and
actual collapse (lower row); annotation $11$ of the vanishing edge $uv$
is added to all edges (cofaces) adjoining $u$.}
\label{fig:annotation}
\end{figure*}
%It is known that if $K_i$ is a $2$- or $3$-manifold, the 
%underlying spaces $|K_i|$ and $|K_{i+1}|$ are homeomorphic 
%if $(u,v)$ satisfies the link
%condition. Instead of homeomorphism, if we require only
%homotopy equivalence, we can relax
%the condition of $K_i$ being only $2$- or $3$-manifolds,
%thanks to a result in~\cite{ALS11}. 
\begin{proposition}\cite{ALS11}
If an elementary collapse $f_i: K_i\rightarrow K_{i+1}$
satisfies the link condition, then the
underlying spaces $|K_i|$ and $|K_{i+1}|$ remain 
homotopy equivalent and hence the induced
homomorphisms $f_{i_*}: H_*(K_i)\rightarrow
H_*(K_{i+1})$ and $f_i^*: H^*(K_i)\leftarrow H^*(K_{i+1})$
are isomorphisms.
\label{lnk}
\end{proposition}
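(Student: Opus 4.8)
The plan is to reduce the statement to the coning construction of Section~\ref{sec:simulate} and then argue that, under the link condition, the augmented complex $\hat{K}$ collapses all the way down to $K$. Recall from Proposition~\ref{zigzag} that for the elementary collapse $f_i = f\colon K \to K'$ (collapsing $(u,v)$ to $u$) we have $f_* = (i'_*)^{-1}\circ i_*$, where $i\colon K\hookrightarrow\hat{K}$ and $i'\colon K'\hookrightarrow\hat{K}$ are the canonical inclusions and $i'_*$ is always an isomorphism. Hence $f_*$ is an isomorphism as soon as $i_*$ is, and in fact it suffices to show that $i$ is a homotopy equivalence: an analogous unconditional matching removing $v$ shows $\hat{K}\searrow K'$ (recovering and strengthening the isomorphism $i'_*$), so chaining the two collapses gives $|K|\simeq|\hat{K}|\simeq|K'|$, which is the asserted homotopy equivalence of $|K_i|$ and $|K_{i+1}|$; homotopy invariance of cohomology then yields that $f_i^*$ is an isomorphism as well. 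Thus everything comes down to proving that $\hat{K}$ collapses onto $K$.

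To exhibit this collapse I would first describe the set $A=\hat{K}\setminus K$ of newly inserted simplices. Since the closed star $\cst v$ equals the join $v * \Lk v$, every simplex of $\hat{K}=K\cup(u*\cst v)$ that is not already in $K$ must contain $u$; writing such a simplex as $\sigma\cup\{u\}$ with $u\notin\sigma$, one has $\sigma\in\cst v$. I would then set up the discrete matching on $A$ that pairs each added simplex not containing $v$ with its augmentation by $v$, i.e. $\sigma\cup\{u\}$ is matched with $\sigma\cup\{u,v\}$. Because $\cst v$ is closed under this augmentation for $\sigma\in\Lk v$, both members lie in $\hat{K}$, and a standard argument shows that a matching defined by adding or removing the single fixed vertex $v$ is acyclic, so it realizes a sequence of elementary collapses. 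The one thing left to verify is that the matching stays internal to $A$, i.e. that a simplex of $A$ is never paired with a simplex of $K$.

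This last point is exactly where the link condition enters, and I expect it to be the crux of the argument. The forward direction is free: if $\sigma\cup\{u\}\in A$ then $\sigma\cup\{u,v\}\notin K$, since otherwise its face $\sigma\cup\{u\}$ would lie in $K$. The reverse direction is the delicate one: suppose an added simplex $\sigma\cup\{u,v\}\in A$ were paired with $\sigma\cup\{u\}\in K$. Then $\sigma\cup\{u\}\in K$ gives $\sigma\in\Lk u$, while $\sigma\cup\{v\}\in\cst v\subseteq K$ gives $\sigma\in\Lk v$; the link condition $\Lk u\cap\Lk v=\Lk\{u,v\}$ then forces $\sigma\in\Lk\{u,v\}$, i.e. $\sigma\cup\{u,v\}\in K$, contradicting $\sigma\cup\{u,v\}\in A$. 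Hence the matching is a perfect acyclic matching of $A$ with itself, so $\hat{K}\searrow K$ and $i$ is a homotopy equivalence. Combined with the always-valid collapse $\hat{K}\searrow K'$, this proves that $|K_i|$ and $|K_{i+1}|$ are homotopy equivalent and that $f_{i_*}$ and $f_i^*$ are isomorphisms. The main obstacle is thus isolated in this single combinatorial step, and the link condition is precisely what guarantees that no inserted simplex is left unmatched.
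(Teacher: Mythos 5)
Your argument is correct, but it is worth noting that the paper does not prove this proposition at all: it is quoted from~\cite{ALS11} and used as a black box. What you have supplied is therefore a self-contained proof, and it is a nice one, because it reuses the paper's own coning machinery rather than importing the argument of~\cite{ALS11}. Concretely, you reduce everything to Proposition~\ref{zigzag} ($f_* = (i'_*)^{-1}\circ i_*$) together with Claim~\ref{easy-claim} ($K'\subseteq\hat{K}$), and then exhibit two acyclic matchings on the Hasse diagram: the unconditional one on $\hat{K}\setminus K'$ (pairing $\tau\leftrightarrow\tau\cup\{u\}$ over simplices containing $v$ --- your phrase ``matching removing $v$'' should really say this, since the pairing vertex there is $u$, not $v$), and the conditional one on $A=\hat{K}\setminus K$ pairing $\sigma\cup\{u\}\leftrightarrow\sigma\cup\{u,v\}$. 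Your identification of where the link condition enters is exactly right: $\sigma\cup\{u\}\in K$ and $\sigma\cup\{v\}\in K$ give $\sigma\in\Lk u\cap\Lk v$, and only the hypothesis $\Lk u\cap\Lk v=\Lk\{u,v\}$ forces $\sigma\cup\{u,v\}\in K$, which is what keeps the matching internal to $A$; the degenerate pair $\beta=\{u,v\}$ is excluded because the link condition also demands $\{u,v\}\in K_i$. The payoff of your route is that it simultaneously strengthens Proposition~A.2 (the contiguity argument there only gives that $i'_*$ is an isomorphism, whereas your matching gives the collapse $\hat{K}\searrow K'$) and makes the whole of Section~\ref{sec:simulate} plus this proposition rest on one uniform discrete-Morse-theoretic mechanism; the cost is that you must invoke the standard facts that a perfect acyclic matching on $\hat{K}\setminus L$ for a subcomplex $L$ yields a collapse $\hat{K}\searrow L$, and that a matching governed by a single fixed vertex is acyclic --- both true and standard, but external to the paper.
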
 

\begin{wrapfigure}{l}{1.1in}
\input{commute2.pstex_t}
\end{wrapfigure}
If an elementary collapse satisfies the link condition, we 
can perform the collapse knowing that the (co)homology does not change.
Otherwise, we know that the (co)homology is affected 
by the collapse
and it should be reflected in our updates for annotations.
The diagram at the left provides a precise means to carry out the
change in (co)homology.
Let $S$ be the set of simplices in non-decreasing order of dimensions, 
whose absence from $K_i$ makes $(u,v)$ violate the link condition. 
For each such simplex $\sigma\in S$, we modify the annotations
of every simplex which we would have done if $\sigma$ were to be
inserted. Thereafter, we carry out the rest of the 
elementary collapse. In essence, implicitly,
we obtain an intermediate
complex $\hat{K_i} = K_i \cup S$ 
where the diagram on the left commutes.
Here, $f_i'$ is induced by the same vertex map that induces $f_i$,
and $j$ is an inclusion. 
%with $f_i'$ being the extension of $f_i$ to $\hat{K_i}$.
This means that the persistence of $f_i$ is identical to
that of $f_i'\circ j$ which justifies our action of elementary
inclusions followed by the actual collapses.

We remark that this is the only place where we may insert
implicitly a simplex $\sigma$ in the current approach.
The number of such $\sigma$ is usually much smaller than the number of simplices 
in the cone $u * \cst v$ that we would need to insert for the algorithm 
using coning. %recall Figure \ref{fig:compare}.

Next, we transfer annotations in $\hat{K_i}$. This step
locally changes the annotations for simplices containing the vertices
$u$ and/or $v$. The following definition facilitates the description. 
\begin{definition}
For an elementary collapse $f_i:K_i\rightarrow K_{i+1}$, 
a simplex $\sigma\in K_i$ is called
{\em vanishing} if the cardinality of $f_i(\sigma)$ is one less than that
of $\sigma$. Two simplices $\sigma$ and $\sigma'$ are
called {\em mirror} pairs if one contains $u$ and the other $v$, and
share rest of the vertices. In Figure~\ref{fig:annotation}(lower row), 
the vanishing
simplices are $\{\{u,v\}, \{u,v,w\}\}$ and the mirror pairs are
$\{\{u\},\{v\}\}$, $\{\{u,w\},\{v,w\}\}$.
\end{definition}

In an elementary collapse that sends
$(u,v)$ to $u$, all vanishing simplices need
to be deleted, and all simplices containing $v$ need to be pulled
to the vertex $u$ (which are their mirror partners).
% which requires changing their annotations.
We update the annotations in such a way that the annotations
of all vanishing simplices become zero, and those of each pair of mirror simplices
become the same. Once this is achieved, the collapse is implemented by
simply deleting the vanishing simplices and replacing
$v$ with $u$ in all simplices
containing $v$ without changing their annotations.
The following proposition provides the 
justification behind the specific update operators that we perform.
\begin{proposition}
Let $K$ be a simplicial complex and
$\ann: K(p)\rightarrow {\mathbb Z}_2^g$ be a valid annotation.
Let $\sigma\in K(p)$ be any $p$-simplex and $\tau$
any of its $(p-1)$-faces. Adding $\ann_{\sigma}$ to the annotation
of all cofaces of $\tau$ of codimension $1$
produces a valid annotation for $K(p)$.
Furthermore, the cohomology basis corresponding to the annotations remains
unchanged by this modification.
\label{annot-push}
\end{proposition}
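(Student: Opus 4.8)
The plan is to translate the combinatorial update into the cochain language of Proposition~\ref{annot-cohom} and then observe that it changes each cocycle only by a coboundary. First I would invoke that proposition: since $\ann$ is valid, the cochains $\phi_i\in C^p(K)$ defined by $\phi_i(\rho)=\ann_\rho[i]$ are cocycles whose classes $[\phi_1],\ldots,[\phi_g]$ form a basis of $H^p(K)$. Writing $\ann'$ for the updated annotation and $\phi_i'(\rho)=\ann'_\rho[i]$ for the associated cochains, it suffices to show that each $\phi_i'$ is again a cocycle and that $[\phi_i']=[\phi_i]$. Because $g$ and $K$ are unchanged, the $2\Rightarrow 1$ direction of Proposition~\ref{annot-cohom} then yields both conclusions at once: $\ann'$ is valid, and the underlying cohomology basis is preserved.

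The key structural observation is that the set on which the update acts---the codimension-$1$ cofaces of $\tau$, i.e. the $p$-simplices $\rho$ with $\tau\subset\rho$---is exactly the support of the coboundary $\delta\tau^*$, where $\tau^*\in C^{p-1}(K)$ is the elementary cochain with $\tau^*(\tau)=1$ and $\tau^*=0$ on every other $(p-1)$-simplex. Indeed, for a $p$-simplex $\rho$ we have $\delta\tau^*(\rho)=\tau^*(\partial\rho)$, which equals $1$ precisely when $\tau$ is a face of $\rho$. Consequently the prescribed update $\ann'_\rho=\ann_\rho+\ann_\sigma$ for $\rho\supset\tau$ (and $\ann'_\rho=\ann_\rho$ otherwise) is recorded coordinatewise as
$$\phi_i' = \phi_i + \ann_\sigma[i]\,\delta\tau^*, \qquad i=1,\ldots,g.$$

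From here the conclusion is immediate. Each $\delta\tau^*$ is a coboundary, hence a cocycle (as $\delta\circ\delta=0$), so $\phi_i'$ is a sum of cocycles and therefore a cocycle; moreover it differs from $\phi_i$ only by the coboundary $\ann_\sigma[i]\,\delta\tau^*$, whence $[\phi_i']=[\phi_i]$ in $H^p(K)$. Thus $[\phi_1'],\ldots,[\phi_g']$ is literally the same basis of $H^p(K)$, and Proposition~\ref{annot-cohom} returns that $\ann'$ is valid with its cohomology basis unchanged.

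I do not expect a genuine obstacle; the entire content is the identification of the codimension-$1$ cofaces of $\tau$ with the support of $\delta\tau^*$, after which the principle that adding a coboundary preserves the cohomology class finishes the argument. The single point demanding a word of care is that $\sigma$ itself is a coface of $\tau$, so its own annotation is replaced by $\ann_\sigma+\ann_\sigma=0$; this is harmless, being already accounted for by the displayed formula (on $\sigma$ it reads $\phi_i'(\sigma)=\ann_\sigma[i]+\ann_\sigma[i]=0$) and affecting nothing in the cocycle--coboundary bookkeeping.
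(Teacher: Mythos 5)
Your proof is correct and follows essentially the same route as the paper: both arguments identify the set of codimension-$1$ cofaces of $\tau$ with the support of $\delta\tau^*$ (the paper calls this cochain $\phi$ with $\phi(\tau)=1$ and $0$ elsewhere), observe that the update adds the coboundary $\ann_\sigma[i]\,\delta\tau^*$ to $\phi_i$, and conclude via Proposition~\ref{annot-cohom} that the classes, and hence the validity of the annotation, are unchanged. Your single displayed formula merely compresses the paper's two-case split on whether $\phi_i(\sigma)$ is $0$ or $1$.
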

\begin{proof}
Let $\{[\phi_1],\ldots,[\phi_g]\}$ be a cohomology basis
of $H^p(K)$ corresponding
to $\ann: K(p)\rightarrow {\mathbb Z}_2^g$ as stated in
Proposition~\ref{annot-cohom}. Let $T$ be the set of cofaces of
$\tau$ of codimension $1$ and
\begin{eqnarray*}
\phi'_i(\sigma')=
\left\{\begin{array}{ll}
                 \phi_i(\sigma') & \mbox{ if } \sigma'\in K(p)\setminus T\\
                 \phi_i(\sigma') + \phi_i(\sigma) & \mbox{ if } \sigma' \in T
        \end{array}
\right.
\end{eqnarray*}
By construction, $\phi_i'$ is the cochain that corresponds to the
new annotation obtained by adding $\ann_{\sigma}$ to that of the simplices
in $T$. We prove that $\phi_i'$ is a cocycle in the class $[\phi_i]$.
Therefore, $\{[\phi_1'],\ldots,[\phi'_g]\}$ is a cohomology basis
of $H^p(K)$. The new annotation is valid by
Proposition~\ref{annot-cohom} and the cohomology bases remain unchanged.

If $\phi_i(\sigma)=0$, we have $\phi'_i=\phi_i$ and thus $[\phi_i']= [\phi_i]$
trivially. So, assume that $\phi_i(\sigma)=1$. In this case
$\phi'_i=1+\phi_i$ on $T$ and equals $\phi_i$ everywhere else.
Consider the $(p-1)$-cochain $\phi$ defined by $\phi(\tau)=1$
and $\phi(\tau')=0$ for every $\tau'\in K_{p-1}\setminus \tau$.
Then the coboundary $\delta \phi$ is a $p$-cochain that is $1$
for every simplex in $T$ and $0$ on other $p$-simplices.
We can write $\phi_i'=\phi_i + \delta \phi$. It follows that
$[\phi_i']= [\phi_i]$.
\end{proof}
~\\
%%%%%%%%%%%%%%%%%%%%%%%%%figures%%%%%%%%%%%%%%%%%%%%%%%%%%%%%%%%%
%%\begin{wrapfigure}{r}{3.3in}
%\begin{figure}
%\centering
%\includegraphics[scale=0.7]{images/Annotation_Fig.eps}
%%\input{images/annotation.pstex_t}
%\caption{Updating annotations: 1) inserting the blue triangle in upper row; it kills a $1$-cycle whose annotation was $10$ originally. The first bit 1 is added
%to the single edge annotation which is 11 and make it 01. 
%The first bit is deleted from all annotations.
%2) collapsing $uv$ in lower row; the annotation 11 of $uv$
%is added to all edges containing $u$ before the collapse. 
%Only edge annotations are shown.}
%\label{fig:annotation}
%%\end{wrapfigure}
%\end{figure}

Consider an elementary collapse $f_i:K_i\rightarrow K_{i+1}$ that
sends $(u,v)$ to $u$.
We update the annotations in $K_i$ as follows. 
First, note that the vanishing simplices are exactly those simplices
containing the edge $\{u,v\}$.
For every $p$-simplex
containing  $\{u,v\}$, i.e., a vanishing simplex,
exactly two among its $(p-1)$-faces
are mirror simplices, and all other remaining $(p-1)$-faces
are vanishing simplices.
Let $\sigma$ be a vanishing $p$-simplex and $\tau$ be its
$(p-1)$-face that is a mirror simplex containing $u$.
We add $\ann_{\sigma}$ to the 
annotations of all cofaces of $\tau$ of codimension $1$ including $\sigma$. 
We call this an \emph{annotation transfer} for $\sigma$. 
By Proposition \ref{annot-push}, the new annotation generated 
by this process corresponds to the old cohomology basis 
for $K_i$. 
This new annotation has $\ann_{\sigma}$ as zero since
$\ann_{\sigma}+\ann_{\sigma}=0$. See the the lower row 
of Figure~\ref{fig:annotation}.

%Observe that the above process also applies to all faces of $\tau$ if $|\tau|>1$. 
We perform the above operation for each vanishing simplex. 
It turns out that by using the relations of 
vanishing simplices and mirror simplices, 
each mirror simplex eventually acquires an identical annotation  to that of its partner.
Specifically, we have the following observation. 
\begin{proposition}
After all possible annotation transfers involved in a collapse, 
(i) each vanishing simplex has a zero annotation; 
and (ii) each mirror simplex $\tau$ has the same annotation as its mirror partner simplex $\tau'$.
\label{claim:transferresult}
\end{proposition}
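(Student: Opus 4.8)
The plan is to fix a dimension $p$ and analyze the $p$-simplices in isolation, which is legitimate because a transfer driven by a vanishing $p$-simplex only touches $p$-simplices, namely the codimension-one cofaces of its $(p-1)$-dimensional $u$-mirror face. I would classify the $p$-simplices of $\hat{K_i}$ according to how they meet the collapsing pair: those avoiding both $u$ and $v$, those containing only $u$, those containing only $v$, and the vanishing ones containing both. Every transfer adds an annotation to cofaces of a face of the form $\{u\}\cup W$, and all such cofaces contain $u$. Hence the simplices avoiding $u$ (in particular every mirror simplex containing $v$) are never modified and retain their original annotations throughout; this already reduces statement (ii) to computing the final annotation of the mirror simplices that contain $u$.

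For statement (i), I would show that a vanishing $p$-simplex $\sigma=\{u,v\}\cup W$ is touched by exactly one transfer, its own. Among the $(p-1)$-faces of $\sigma$, the only one of the form $\{u\}\cup W''$ (the kind of face that can drive a transfer hitting $\sigma$) is $\tau=\{u\}\cup W$, obtained by deleting $v$; deleting $u$ gives a mirror face containing $v$, and deleting a vertex of $W$ gives another vanishing simplex. The unique vanishing simplex whose $u$-mirror face is $\tau$ is $\sigma$ itself, so $\sigma$ is altered only when we transfer $\sigma$, at which moment $\ann_{\sigma}$ is still its original value and $\ann_{\sigma}+\ann_{\sigma}=0$ in $\mathbb{Z}_2$. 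The same bookkeeping shows that each transfer contributes a fixed quantity (the original annotation of its driving simplex) to its targets, so the order of the transfers is immaterial and the final annotations are well defined.

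The heart of the argument is statement (ii). Let $\rho=\{u\}\cup W'$ be a mirror simplex with partner $\rho'=\{v\}\cup W'$. By the previous paragraph $\ann_{\rho'}$ is unchanged, while $\rho$ accumulates $\ann_{\sigma_x}$ for exactly the vanishing $p$-simplices $\sigma_x=\{u,v\}\cup(W'\setminus\{x\})$, $x\in W'$, that lie in $\hat{K_i}$. I would relate this sum to a boundary: consider the $(p+1)$-simplex $\hat\sigma=\{u,v\}\cup W'$, whose boundary is $\partial\hat\sigma=\rho+\rho'+\sum_{x\in W'}\sigma_x$. Since the annotation is valid, each coordinate cochain $\phi_i$ is a cocycle by Proposition~\ref{annot-cohom}, so $\phi_i(\partial\hat\sigma)=(\delta\phi_i)(\hat\sigma)=0$; reading this coordinatewise gives $\ann_{\rho}+\ann_{\rho'}+\sum_{x\in W'}\ann_{\sigma_x}=0$ for the original annotations. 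Substituting into the accumulated value of $\rho$ yields $\ann_{\rho}^{\mathrm{final}}=\ann_{\rho'}=\ann_{\rho'}^{\mathrm{final}}$, which is statement (ii).

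The main obstacle, and the step where the earlier preparation pays off, is justifying that $\hat\sigma=\{u,v\}\cup W'$ actually lies in $\hat{K_i}$, since only then is $\phi_i(\partial\hat\sigma)=0$ a legitimate instance of the cocycle condition and only then are all the faces $\sigma_x$ guaranteed present. This is precisely where the link condition enters: from $\rho,\rho'\in\hat{K_i}$ we get $W'\in\Lk u\cap\Lk v$, and the condition $\Lk u\cap\Lk v=\Lk\{u,v\}$ forces $W'\in\Lk\{u,v\}$, i.e. $\{u,v\}\cup W'\in\hat{K_i}$, after which all faces $\sigma_x$ lie in $\hat{K_i}$ as well. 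Thus the inclusions of the set $S$ performed before the transfers, which enforce the link condition in $\hat{K_i}$, are exactly what makes the cocycle relation available. I would close by noting that (i) together with Proposition~\ref{annot-push} (each transfer preserves validity) certifies that the resulting annotation is still valid, so the subsequent deletion of vanishing simplices and merging of mirror pairs is consistent.
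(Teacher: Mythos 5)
Your proof is correct and takes essentially the same route as the paper's: part (i) rests on the observation that a transfer driven by a vanishing simplex touches no other vanishing simplex, and part (ii) hinges on the link condition forcing the simplex $\{u,v\}\cup W'$ into $\hat{K_i}$ so that the boundary relation $\ann_{\rho}+\ann_{\rho'}+\sum_x\ann_{\sigma_x}=0$ applies. The only (cosmetic) difference is that you evaluate this relation on the original annotation and add up the accumulated transfers, whereas the paper invokes Proposition~\ref{annot-push} to apply the same relation to the final annotation; your version is slightly more self-contained and also makes explicit the order-independence that the paper leaves implicit.
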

\begin{proof}
%It then follows that our algorithm as described above 
Our algorithm performs an annotation transfer for
every vanishing simplex.
Furthermore, the annotation transfer for a
vanishing simplex $\sigma$ does not affect
the annotation of any other vanishing simplex. 
Hence, the annotation of each vanishing
simplex $\sigma$ is updated exactly once after which it becomes
zero and remains so throughout the rest of the annotation
transfers for other vanishing simplices.
This proves claim (i).

For claim (ii), consider a pair of $(p-1)$-dimensional
mirror simplices $\tau = \{ u, u_2, \ldots, u_p \}$ and
$\tau' = \{ v, u_2, \ldots, u_p \}$.
Since $(u, v )$ satisfies the link condition, it is necessary
that the $p$-simplex $\alpha=\{ u, v, u_2, \ldots, u_p \}$ must 
exist in $\hat{K}_i$.
Thus, we have that $\ann_{\partial \alpha} = 0$. On the other hand,
other than $\tau$ and $\tau'$, any $(p-1)$-face of $\alpha$ is a
vanishing simplex, and by Claim (i), in the end, has 
zero annotation.
Therefore, after all annotation transfers, $\ann_{\partial \alpha} = \ann_\tau + \ann_{\tau'} = 0$, implying that $\ann_\tau = \ann_{\tau'}$.
\end{proof}
%\\
~\\

Subsequent to the annotation transfer, the annotation of $\hat{K}_i$
fits for actual collapse since each pair of mirror simplices which
are collapsed to a single simplex get the identical annotation
and the vanishing simplex acquires the zero annotation. 
Furthermore, Proposition~\ref{annot-push} tells us that the cohomology
basis does not change by annotation transfer which aligns with the
fact that $f_{i}'^*: H^*(\hat{K_i})\leftarrow H^*(K_{i+1})$ is indeed an
isomorphism. Accordingly, no time stamp changes after the annotation
transfer and the actual collapse. 
The next section
presents formal statements justifying the algorithm for annotation updates. 

\section{Justification}
%\label{sec:justification}
\label{SEC:JUSTIFICATION}

\vspace*{0.1in}
In this section we justify the algorithm for annotation updates.
Generically assume $f: K\rightarrow K'$ is an elementary map
inducing a homomorphism $H^p(K)\stackrel{f^*}{\leftarrow} H^p(K')$
in ${\cal M}^*$ where $K=K_i$ and
$K'=K_{i+1}$ for some $i\in \{1,\ldots,n\}$.
Let $\{\phi_i'\}$ be the cochains
corresponding to the annotations computed for $K'$ given
a valid annotation $\{\phi_i\}$ for
$K$.
First, we show that
the computed annotations remain valid (proof in Appendix~\ref{appendix:justification}),
that is, $\{\phi_i'\}$ indeed forms a cohomology basis for
$H^p(K')$. Then, we show in Propositions~\ref{element-just}
and \ref{prop:collapse} how
the cohomology bases $\{[\phi_i']\}$ and $\{[\phi_i]\}$
for $K'$ and $K$ respectively correspond
under the homomorphism $f^*$. The time stamps
used by the algorithm concur with this correspondence.
\begin{proposition}
Let $\{[\phi_i]\}$ be the cohomology basis for $H^p(K)$ given by
a valid annotation for $K$ and
$\{\phi'_i\}$ be the cochains corresponding to the annotation computed
for $K'$ by the
update algorithm. Then, $\{[\phi_i']\}$ is a cohomology
basis for $H^p(K')$.
\label{prop:correctbasis}
\end{proposition}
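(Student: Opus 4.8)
The plan is to use Proposition~\ref{annot-cohom} to convert the claim ``$\{[\phi'_i]\}$ is a cohomology basis of $H^p(K')$'' into the equivalent assertion that the annotation computed for $K'$ is \emph{valid}, and then to verify validity for each atomic operation that the update algorithm performs on $f$. An elementary inclusion is a single such operation, whereas for an elementary collapse the algorithm first performs the elementary inclusions of the simplices in $S$ (to restore the link condition) and then an annotation transfer followed by the actual collapse. Thus it suffices to show that validity is preserved by (a) a single elementary inclusion and (b) the transfer-and-collapse of a pair satisfying the link condition; the general statement then follows by composing these steps.

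For a single elementary inclusion of a $p$-simplex $\sigma$, the only new simplex is $\sigma$, so the annotation in every dimension except the one updated by the algorithm is unaffected or extends by $0$ on $\sigma$ (which lies on no $p$-cycle of $K'$) and stays valid. If $\ann_{\partial\sigma}=0$ we are in Case~(i) and dimension $p$ is affected: $[\partial\sigma]$ already bounds, so no $(p-1)$-class dies and $\sigma$ closes up a new non-bounding $p$-cycle, giving $\rank H^p(K')=g+1$. I then check that the indicator cochain $\phi_{g+1}$ (equal to $1$ on $\sigma$ and $0$ elsewhere) is a cocycle---$\sigma$ is freshly inserted and hence has no $(p+1)$-coface in $K'$---and that the old cochains, now set to $0$ on $\sigma$, remain cocycles since no new $(p+1)$-simplex appears. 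Pairing these $g+1$ cocycles against a basis of $H_p(K')$ obtained by adjoining the new cycle to a basis of $H_p(K)$ yields a block-triangular, full-rank matrix, so by Proposition~\ref{annot-cohom} their classes form a basis. If $\ann_{\partial\sigma}\neq 0$ we are in Case~(ii) and dimension $p-1$ is affected: the new cocycle constraint imposed by $\sigma$ is exactly $\psi(\partial\sigma)=0$, and the algorithm's replacement $\phi'_i=\phi_i+b_i\phi_u$ (with $b_i=\phi_i(\partial\sigma)$ and $u$ the latest nonzero index) forces $\phi'_i(\partial\sigma)=0$ for every $i$ while making $\phi'_u=0$. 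Deleting this vanished coordinate leaves $g-1$ cocycles, and since $[\partial\sigma]\neq 0$ in $H_{p-1}(K)$ we have $\rank H^{p-1}(K')=g-1$; a dimension count together with the pairing of Proposition~\ref{annot-cohom} confirms they form a basis.

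For the transfer-and-collapse step I would invoke the earlier results. By Proposition~\ref{annot-push} each annotation transfer preserves the cohomology basis of $\hat{K_i}$, and by Proposition~\ref{claim:transferresult} the transferred annotation is constant on every mirror pair and zero on every vanishing simplex. This is precisely the condition for the cochains to descend along $f'_i:\hat{K_i}\rightarrow K_{i+1}$: a vanishing $p$-simplex is sent to the zero $p$-chain (matching its zero annotation) and the two members of a mirror pair are sent to the same $p$-simplex of $K_{i+1}$ (matching their equal annotations), so for each transferred cocycle $\phi'_i$ there is a unique $p$-cochain $\psi_i$ on $K_{i+1}$ with $(f'_i)^*\psi_i=\phi'_i$. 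Because the collapse satisfies the link condition, Proposition~\ref{lnk} guarantees that $(f'_i)^*:H^p(K_{i+1})\rightarrow H^p(\hat{K_i})$ is an isomorphism; hence the $\psi_i$ are cocycles whose classes form a basis of $H^p(K_{i+1})$, which is exactly the annotation the algorithm outputs after the collapse. The matching of classes under $f^*$, and of the time stamps, is treated separately in the subsequent propositions.

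The step I expect to be the main obstacle is the cochain-level descent in part~(b): one must argue that $(f'_i)^*\psi_i=\phi'_i$ holds as an equality of cochains, not merely up to a coboundary, which hinges on the fact that a simplicial map sends a dimension-dropping (vanishing) simplex to the \emph{zero} chain and identifies mirror partners exactly---this is what lets the isomorphism of Proposition~\ref{lnk} transport the basis from $\hat{K_i}$ back to $K_{i+1}$. A secondary subtlety is the basis-preservation bookkeeping in inclusion Case~(ii): verifying that pivoting on the latest coordinate $u$ eliminates exactly one class (the one that dies) and leaves the rest independent, which is where the pairing of Proposition~\ref{annot-cohom} does the real work.
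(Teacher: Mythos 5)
Your proof is correct, but it runs on the opposite side of the duality from the paper's own argument. The paper's proof (Appendix~\ref{appendix:justification}) never touches the cochains $\phi_i'$ directly: it verifies condition~2 of the definition of a valid annotation by hand, showing in each case that two cycles $z,z'$ of $K'$ satisfy $\ann_z=\ann_{z'}$ iff $[z]=[z']$, by manipulating chains and boundaries (writing $z+z'=\partial D$, tracking the parity of the $u$th coordinate along a cycle, etc.). You instead route everything through Proposition~\ref{annot-cohom}: you check the cocycle condition for the updated cochains and then verify that the pairing matrix against a suitably chosen homology basis of $K'$ has full rank. For the collapse the two arguments are exactly dual: the paper lifts cycles of $K_{i+1}$ to cycles of $\hat{K}_i$ and compares annotations across the vanishing and mirror simplices, whereas you descend cochains from $\hat{K}_i$ to $K_{i+1}$ --- and your resolution of the point you flag as the main obstacle is right: Proposition~\ref{claim:transferresult} makes $\phi_i'$ constant on fibers and zero on vanishing simplices, so $\psi_i$ with $(f_i')^{\#}\psi_i=\phi_i'$ exists and is unique, and it is automatically a cocycle because $(f_i')^{\#}$ is injective ($f_i'$ being surjective on simplices); Proposition~\ref{lnk} then transports the basis. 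Your version is more conceptual and reuses Propositions~\ref{annot-push}, \ref{claim:transferresult} and \ref{lnk} as black boxes; the paper's is more elementary and self-contained. The only places where your sketch leaves real work undone are the full-rank claims in the two inclusion cases, but these are routine once one fixes a homology basis of $K'$ adapted to $\sigma$ (adjoin a cycle through $\sigma$ in Case~(i); pass to the quotient by $[\partial\sigma]$ in Case~(ii)).
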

First, we focus on when $f$ is an elementary inclusion.
If $f$ is an elementary inclusion, it is known that
in the persistence module, $f_*$ is either injective
in which case a new class is born,
or surjective in which case a class is killed.
In the dual module with cohomology, $f^*$ switches the role, that is,
$f^*$ is surjective when $f_*$ is injective and vice versa.
\begin{proposition}
Let $\sigma :=K'\setminus K$ be a $p$-simplex inserted for
inclusion $f: K\rightarrow K'$.
\begin{enumerate}
\item[i.] $f_*$ is injective ($f^*$ is surjective):
Let $[\phi_1],\ldots,[\phi_g]$
be a basis of $H^p(K)$ given by a valid annotation.
Let $\phi'_1,\ldots,\phi'_{g+1}$ be the cochains
that correspond to the annotation computed for $K'$ by the update
algorithm. Then, $[\phi_i]=f^*([\phi_i'])$
for $i=1,\ldots, g$ and $f^*([\phi_{g+1}'])=0$.
Cohomology bases for dimensions other than $p$ remain unchanged.

\item[ii.] $f_*$ is surjective ($f^*$ is injective):
Let $\{[\phi_1],\ldots, [\phi_g]\}$
be a basis for $H^{p-1}$ given by a valid annotation.
Let $\phi'_1,\ldots, \phi'_{g-1}$ be the cochains
that correspond to the annotation computed
for $K'$ by the update algorithm which deletes the
$u$th element. Then,
for $1\leq i< u$, $[\phi_i]=f^*([\phi_i'])$ and
for $i\geq u$, $[\phi_{i+1}]=f^*([\phi_i'])$.
Cohomology bases for dimensions other than $p-1$ remain
unchanged.
\end{enumerate}
\label{element-just}
\end{proposition}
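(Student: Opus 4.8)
The plan is to exploit that $f:K\hookrightarrow K'$ is an \emph{inclusion}, so the induced pullback on cochains $f^{\#}$ is simply restriction: for a cochain $c$ on $K'$, $f^{\#}(c)$ evaluates $c$ on the simplices of $K$ only. Since an elementary inclusion adds a single $p$-simplex $\sigma$, the complexes $K$ and $K'$ share \emph{all} simplices except $\sigma$. In case (i) the basis lives in dimension $p$, where $\sigma$ sits, whereas in case (ii) it lives in dimension $p-1$, which the insertion leaves untouched. In both cases I would first read off the explicit cochains $\phi_i'$ from the annotation-update rule, then compute $f^{\#}(\phi_i')$ and compare it with $\phi_i$ at the level of cohomology classes. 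I may assume from Proposition~\ref{prop:correctbasis} that $\{[\phi_i']\}$ is already a valid cohomology basis of the relevant group, so only the correspondence under $f^{*}$ remains to be checked.

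For case (i), where $f_*$ is injective and $\ann_{\partial\sigma}=0$, the argument is direct. The update appends a new coordinate that is $1$ on $\sigma$ and $0$ on every other $p$-simplex, and zeroes the first $g$ bits of $\sigma$. Hence for $i\le g$ the cochain $\phi_i'$ agrees with $\phi_i$ on every $p$-simplex of $K$ (only $\sigma$'s old bits were altered, and $\sigma\notin K$), so $f^{\#}(\phi_i')=\phi_i$ on the nose and $f^{*}([\phi_i'])=[\phi_i]$. The new generator $\phi_{g+1}'$ is supported only on $\sigma$, so its restriction to $K(p)$ vanishes, giving $f^{*}([\phi_{g+1}'])=0$. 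Since the update touches only $p$-dimensional annotations and $[\partial\sigma]$ was already trivial, no lower- or higher-dimensional class changes, which settles the ``other dimensions'' clause.

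For case (ii), where $f_*$ is surjective and $\ann_{\partial\sigma}\neq 0$, I would use that $K$ and $K'$ have identical $(p-1)$-simplices, so $f^{\#}$ is the identity on $(p-1)$-cochains and $f^{*}([\phi_i'])$ is just the class of $\phi_i'$ read in $H^{p-1}(K)$. Writing the update in cochain form gives $\phi_i'=\phi_i+\phi_i(\partial\sigma)\,\phi_u$ for $i<u$ and $\phi_i'=\phi_{i+1}$ for $i\ge u$. The second family is clean because $u$ is the \emph{largest} index in the support of $\ann_{\partial\sigma}$, so $\phi_{i+1}(\partial\sigma)=0$ and $f^{*}([\phi_i'])=[\phi_{i+1}]$ exactly. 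The first family is where the real work lies: whenever $\phi_i(\partial\sigma)=1$, i.e.\ $i\in\{i_1,\dots,i_{k-1}\}$, one gets $f^{*}([\phi_i'])=[\phi_i]+[\phi_u]$ rather than $[\phi_i]$.

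I expect this correction term to be the main obstacle, and I would resolve it exactly as the persistence bookkeeping dictates. The cocycle $\phi_u$ is the \emph{youngest} surviving generator in the support (largest index, latest time stamp), so adding $[\phi_u]$ to the strictly older classes $[\phi_{i_l}]$ with $l<k$ is a persistence-admissible change of basis of $H^{p-1}(K)$ that leaves every birth time stamp and the induced pairing unchanged. Under this normalization the stated equalities $[\phi_i]=f^{*}([\phi_i'])$ hold, and---crucially for the algorithm---the class appearing in $H^{p-1}(K)$ \emph{outside} the image of $f^{*}$ is the one indexed by $u$; hence pairing the class born at $i+1$ in $\mathcal{M}^{*}$ with the time stamp of $u$ records the correct persistence interval, and by duality the correct homology pair. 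The delicate point to verify carefully is precisely that the coordinate order maintained by the annotation matches the time-stamp order, so that $u$ is genuinely the youngest generator in the support and the change of basis is legitimate.
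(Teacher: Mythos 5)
Your overall route---treating $f^{\#}$ as plain restriction of cochains (legitimate because $f$ is an inclusion and $K$, $K'$ share every simplex except $\sigma$) and reading the updated cochains directly off the annotation rule---is sound. For case (i) it reaches the paper's conclusion by a more direct path: the paper instead evaluates $\phi_i'\circ f_{\#}$ on a homology basis $[z_1],\ldots,[z_g]$ and compares the resulting vectors, but both arguments come down to the observation that only $\sigma$'s entries are altered and $\sigma\notin K$, so $f^{\#}(\phi_i')=\phi_i$ on the nose and $\phi_{g+1}'$ restricts to zero.

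The divergence in case (ii) is the substantive point, and there you are right while the printed proof is not. The update rule gives exactly your formula $\phi_i'=\phi_i+\phi_i(\partial\sigma)\,\phi_u$, hence $f^*([\phi_i'])=[\phi_i]+\phi_i(\partial\sigma)[\phi_u]$, which differs from $[\phi_i]$ for every $i\in\{i_1,\ldots,i_{k-1}\}$ (a nonempty set whenever $\ann_{\partial\sigma}$ has more than one nonzero entry). A concrete check: with $g=2$, $\ann_{z_1}=[1,1]$, $\ann_{z_2}=[0,1]$ and $\sigma$ inserted so that $\ann_{\partial\sigma}=[1,1]$ (so $u=2$), the update sends $\phi_1\mapsto\phi_1+\phi_2$ and $f^*([\phi_1'])=[\phi_1]+[\phi_2]\neq[\phi_1]$. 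The paper's appendix argument slips at the step ``$\phi_i'(z_j+\partial\sigma)=\phi_i(z_j)$'': since $\ann_{z_j+\partial\sigma}$ is unchanged by the update, the correct value is $\phi_i(z_j)+\phi_i(\partial\sigma)$, and the dropped term $\phi_i(\partial\sigma)$ is precisely your correction. Your repair---absorbing $[\phi_u]$ into the strictly older classes $[\phi_{i_1}],\ldots,[\phi_{i_{k-1}}]$---is the right one: the algorithm appends new annotation elements at the end and deletions preserve relative order, so $u=i_k$ being the largest index in the support does mean it carries the latest time stamp, and the class left outside the image of $f^*$ is $[\phi_u]$ and no older one, which is what validates the algorithm's pairing. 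The one step you should write out in full is the invariance of the interval decomposition under adding a younger basis element to an older one (the same fact that underlies the standard reduction algorithm); with that supplied, your argument is complete and, unlike the paper's own proof of case (ii), correct as it stands.
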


\begin{proof}
We provide the proof for (i) here and defer the proof of (ii) to the appendix.
Recall that $\sigma=K'\setminus K$ is a $p$-simplex inserted for inclusion.
We observe that when $f$ is an inclusion, we have
$f_\#(z)=z$ for any cycle $z$ in $K$ where $f_\#$ denotes the
chain map induced by $f$.

Consider the case for (i).
In this case, $\sigma$ creates a new $p$-cycle and no other $k$-cycle
for $k\not = p$. The annotations for $k$-simplices for $k\not= p$ are
not changed. Therefore, a basis for $H^k(K)$ for $k\not= p$
remains so in $H^k(K')$. So, we can focus only on the case $k=p$.
The algorithm updates the annotations of $p$-simplices in $K$
by appending a $0\in \Z_2$ for everyone except the simplex $\sigma$
which gets a $1\in\Z_2$.
The definition
of the homomorphism $H^p(K)\stackrel{f^*}{\leftarrow} H^p(K')$ provides that,
for every $i\in \{1,\ldots, g\}$,
there is a cocycle $\phi$ defined by the homomorphism
$z\mapsto \phi_i'(f_\#(z))$ where
$f^*([\phi_i'])=[\phi]$.
If $[z_1],\ldots, [z_g]$
is a basis of the homology group $H_p(K)$,
the class $[\phi]$ is uniquely determined
by the vector $[\phi(z_1),\ldots,\phi(z_g)]$.
We have
\begin{eqnarray*}
[\phi(z_1),\ldots,\phi(z_g)]
&=&[\phi_i'(f_\#(z_1),\ldots, \phi_i'(f_\#(z_g))]\\
&=&[\phi_i'(z_1),\ldots,\phi_i'(z_g)]\\
&=&[\phi_i(z_1),\ldots,\phi_i(z_g)].
\end{eqnarray*}
The last equality follows from the fact that
$\phi_i'(z_j)=\phi_i(z_j)$ because the $i$th element 
in the annotation for $p$-simplices remains the same
for $1\leq i \leq g$. Since $\phi$ and $\phi_i$ evaluate
the basis $[z_1],\ldots,[z_g]$ the same, we have
$[\phi]=[\phi_i]$, that is, $f^*([\phi_i'])=[\phi_i]$
as we are required to prove. Following the same argument
we see that
$
[\phi_{g+1}'(f_\#(z_1),\ldots, \phi_{g+1}'(f_\#(z_g))]
=[0,\ldots,0]
$
since the cycles $z_i$ for $i=1,\ldots,g$ do not include
the simplex $\sigma$ and thus have the element $0$ in the
$(g+1)$-th position of the annotation for every simplex in them.
Clearly, $f^*([\phi_{g+1}'])=0$.\\

Similar to the case above, one can prove for case (ii) 
(see Appendix~\ref{appendix:justification}) that
$$
[\phi(z_1),\ldots,\phi(z_g)]
=[\phi_i(z_1),\ldots,\phi_i(z_g)]
$$
giving $f^*([\phi_i'])=[\phi]=[\phi_i]$ for $i=\{1,\ldots,u-1\}$.
The case for $i\in \{u+1,\ldots, g\}$ can be proved similarly.
The only caveat is that the $u$th element is zeroed out in annotation,
so there is a left shift of the elements lying to the right 
of the $u$th element 
in the annotation which accounts for the assertion
$f^*([\phi_i'])=[\phi_{i+1}]$.
\end{proof}
~\\

Next, we consider the case when $f=f_{i}$ is an elementary
collapse. Recall that we implement such a collapse as a composition
of elementary inclusions $j$ and a vertex collapse $f'$ where
$f=f'\circ j$. This induces the following sequence
$H^*(K) \stackrel{j^*}{\leftarrow} H^*(\hat{K})
\stackrel{f'^*}{\leftarrow} H^*(K')$.
Since we have already argued about inclusions, we only need to
show that the annotation updates reflect the map
$f'^*$.

\begin{proposition}
Let $[\phi_1],\ldots,[\phi_g]$ be a basis of
$H^p(\hat{K})$ given by a valid annotation.
Let $\phi'_1,\ldots,\phi_g'$ be the cochains
that correspond to the annotation computed
for $K'$ by the update algorithm. Then,
$[\phi_i]=f'^*([\phi_i'])$
for $i=1,\ldots,g$.
\label{prop:collapse}
\end{proposition}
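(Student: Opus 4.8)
The plan is to compute the pullback cochain $f'^*\phi'_i$ on $\hat{K}$ explicitly and show that it equals, on the nose, the cochain representing $[\phi_i]$ after the annotation transfer. First I would fix the chain map $f'_\#\colon C_p(\hat{K})\to C_p(K')$ induced by the vertex collapse sending $v\mapsto u$. A vanishing $p$-simplex (one containing the edge $\{u,v\}$) has image of dimension $p-1$, so $f'_\#$ sends it to $0$; each mirror $p$-simplex $\tau_v\ni v$ and its partner $\tau_u\ni u$ are both sent to the single simplex $\bar\tau\in K'$ obtained by substituting $u$ for $v$; every remaining $p$-simplex is fixed. The induced cohomology homomorphism then satisfies $(f'^*\psi)(x)=\psi(f'_\#(x))$ for $\psi\in C^p(K')$ and $x\in C_p(\hat{K})$, by the definition of $f^*$ recalled in Section~\ref{sec:simp-map}.

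Next I would record the effect of the annotation transfer. Write $\tilde\phi_i$ for the cochain on $\hat{K}$ corresponding to the annotation obtained after all transfers have been carried out (but before deletion and identification). Proposition~\ref{annot-push} guarantees that the transfer leaves the cohomology basis intact, so $[\tilde\phi_i]=[\phi_i]$ in $H^p(\hat{K})$, while Proposition~\ref{claim:transferresult} tells us that $\tilde\phi_i$ vanishes on every vanishing simplex and takes equal values on the two members of every mirror pair. The cochain $\phi'_i$ on $K'$ is then read off from the surviving annotations: on $\bar\tau$ it takes the common value $\tilde\phi_i(\tau_u)=\tilde\phi_i(\tau_v)$, and on any simplex untouched by the collapse it agrees with $\tilde\phi_i$. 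Here Proposition~\ref{claim:transferresult} is doing the essential work, since the equality of the two mirror annotations is exactly what makes $\phi'_i$ well defined on the identified simplex $\bar\tau$, and the vanishing of the transferred annotation on vanishing simplices is what lets them be discarded.

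Then I would verify $f'^*\phi'_i=\tilde\phi_i$ as cochains on $\hat{K}$, checking the simplex types against the description of $f'_\#$. On a vanishing simplex $\sigma$ we get $(f'^*\phi'_i)(\sigma)=\phi'_i(f'_\#(\sigma))=\phi'_i(0)=0=\tilde\phi_i(\sigma)$; on a simplex $\tau_u$ or $\tau_v$ involved in the collapse we get $(f'^*\phi'_i)(\tau_\bullet)=\phi'_i(\bar\tau)=\tilde\phi_i(\tau_\bullet)$; and on any untouched $p$-simplex $\mu$ we get $(f'^*\phi'_i)(\mu)=\phi'_i(\mu)=\tilde\phi_i(\mu)$. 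Hence $f'^*\phi'_i=\tilde\phi_i$ as cochains, and passing to cohomology yields $f'^*([\phi'_i])=[\tilde\phi_i]=[\phi_i]$, which is the assertion. That $\{[\phi'_i]\}$ is in fact a basis of $H^p(K')$ then comes for free, since $f'^*$ is an isomorphism by the link condition and Proposition~\ref{lnk}.

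I expect the only real obstacle to be bookkeeping: pinning down the behaviour of $f'_\#$ on the three classes of simplices---in particular remembering that vanishing $p$-simplices lose a dimension and therefore die in $C_p$---and confirming that the two conclusions of Proposition~\ref{claim:transferresult} align precisely with the two places where a naive pullback could fail, namely the discarded vanishing simplices and the identified mirror pairs. Once these match up, the pleasant feature is that $f'^*\phi'_i=\tilde\phi_i$ holds exactly rather than merely up to a coboundary, so no separate homotopy or cohomologous-representative argument is needed.
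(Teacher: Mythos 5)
Your proposal is correct and follows essentially the same route as the paper: both arguments rest on Proposition~\ref{annot-push} (the transfer keeps the class $[\phi_i]$), Proposition~\ref{claim:transferresult} (zero on vanishing simplices, equal on mirror pairs), and the explicit description of $f'_\#$ on the three kinds of simplices. The only difference is cosmetic --- you verify $f'^*\phi_i'=\tilde\phi_i$ as an identity of cochains, while the paper deduces $f'^*[\phi_i']=[\phi_i]$ by evaluating both sides on a homology basis $[z_1],\ldots,[z_g]$; your version is marginally sharper but uses the same observations.
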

\begin{proof}
First, recall that $f'_*$ and hence
$f'^*$ is an isomorphism due to Proposition~\ref{lnk}
as the vertex pair $(u,v)$ satisfies the link condition in
$\hat{K}$. Let $[z_1],\ldots,[z_g]$ be a basis
in $H_*(\hat{K})$. As before, let $\phi$ be a cocycle
defined by the homomorphism $z\mapsto\phi'_i(f'_\#(z))$
where $f'^*([\phi_i'])=[\phi]$.
We have
$[\phi(z_1),\ldots,\phi(z_g)]
=[\phi_i'(f'_\#(z_1)),\ldots, \phi_i'(f'_\#(z_g))].
$
Recall that we first
carry out an annotation transfer in $\hat{K}$ to match the
annotations for the mirror simplices and to zero out
the annotations for the vanishing simplices.
This update does not change the cohomology classes thanks to
Proposition~\ref{annot-push}. So, we focus on the update due to the
vertex collapse. Every pair of mirror simplices carries their annotation
into the collapsed simplex, and vanishing simplices lose their
zero annotations as they disappear. In effect, we have
$\phi_i'(f'_{\#} (z_j))=\phi_i(z_j)$,
giving us that
$[\phi(z_1),\ldots,\phi(z_g)]
=[\phi_i(z_1),\ldots, \phi_i(z_g)].
$
Therefore, $f'^*([\phi_i'])=[\phi]=[\phi_i]$ for
$i=1,\dots,g$.
\end{proof}

\section{Application to topological data analysis}
\label{SEC:PDAPPROX}
In topological data analysis, several applications and approaches
use Rips complex filtration~\cite{ALS11,DSW11,Sheehy}.
The computation
of the persistence diagram or its approximation for a Rips filtration appears
to be a key step in these applications.
However, the size of this filtration becomes a bottleneck
because of the inclusive nature of Rips complexes.
A natural way to handle this problem is to successively subsample 
the input data and build a filtration on top of them.
We show how one can apply our results from  
previous sections to approximate the
persistence diagrams of a Rips filtration from such a sparser filtration. 
%Let $\Rips^r (V)$ denote the Rips complex on point set $V$ with parameter $r$. 
Given a set of points $V\subset \mathbb{R}^d$ 
(Similar to \cite{Sheehy}, results in this section can be extended to 
any metric space with doubling dimension $d$.), 
let $\Rips^r (V)$ denote the Rips complex on the point set $V$ with parameter $r$. That is,  
a $k$-simplex $\sigma=\{u_0, \ldots, u_k \} \subseteq V$ is in $\Rips^r(V)$ if and only if $\| u_i - u_j \| \le r$ for any $i, j \in [0,k]$. 
We present an algorithm to approximate the persistence diagram for the following Rips filtration. The parameters $\alpha>0$ and $0\leq \eps \leq 1$ are assumed to be preselected. 
\begin{equation}
%\hookrightarrow \Rips^{(1+\eps)^2 \alpha}(V)
\Rips^\alpha(V) \hookrightarrow \Rips^{(1+\eps)\alpha} (V)  \cdots \hookrightarrow \Rips^{(1+\eps)^m \alpha}(V). 
\label{eqn:ripsfiltration}
\end{equation}

The number of $k$-simplices in a Rips complex 
with $n$ vertices can be $\Theta(n^{k+1})$. 
This makes computing the persistent homology of the above filtration costly. 
In \cite{Sheehy}, Sheehy proposed to approximate the persistence diagram 
of the above filtration by another Rips filtration where 
each simplicial complex involved has size only linear in $n$. 
This approach allows vertices to be collapsed (deleted) 
with a weighting scheme when the parameter 
$r$ for the Rips complex becomes large, which helps to keep the size of 
the simplicial complex at each stage small. 
%He proposed a weighting scheme that helps approximating
%the original persistent module 
%by another arising out of a filtration connected by inclusions and 
%thereby allowing classical algorithms such as \cite{ELZ02} to be 
%employed. 

In this section, we provide an alternative approach to approximate the 
persistence diagram of the filtration given in (\ref{eqn:ripsfiltration}). 
We achieve sparsification by
subsampling as in~\cite{Sheehy}, 
but our persistence algorithm for
simplicial maps allows us to handle the sequence of complexes 
induced by the clustering / collapsing of vertices directly
instead of an additional weighting scheme. 
We consider two sparsification schemes, one produces a sequence of sparsified 
Rips complexes, and the other produces a sequence of 
graph induced complexes (GICs)
which have been shown to be even sparser in practice~\cite{DFW13}. 
%However, the construction of GIC sequence may be costlier than
%that of the Rips creating a tradeoff of space versus pre-processing time in practice.
Asymptotically, both sequences have sizes linear in number of vertices. 
%Our alternative approach has the same asymptotic time / space complexity as Sheehy's, but is simpler and 
%conceptually cleaner.  
%avoiding the need to design appropriate weighting functions, and we believe it is concepturally cleaner. 

%This new filtration in essence is simply a filtration connected by simplicial maps. 
%Sheehy then developed an elegant weighting scheme to replace the simplicial  maps induced by vertex collapses with a filtration again connected by inclusions, so that classical algorithms such as \cite{ELZ02} can be employed. 
%
%In this section, we provide an alternative approach to approximate the persistence diagram of the filtration given in (\ref{eqn:ripsfiltration}). 
%We follow the same vertex-collapse framework as Sheehy, but handle the filtration connected by simplicial maps directly, and thus avoid the need of re-weighting. 

%we will show that we can convert the filtration in Eqn (\ref{eqn:ripsfiltration}) to a simplicial filtration where each simplicial complex involved has size linear in $n$. 
%The persistence diagram of this simplicial filtration approximates that of Eqn (\ref{eqn:ripsfiltration}). 
%We can then apply the algorithm described in Section \ref{sec:onedirection} to compute the persistence diagram of the simplicial filtration. 

\subsection{Persistence diagram approximation by sparsified Rips complex}
\label{subsec:sparsifyRips}
Given a set of points $V$, we say that $V' \subseteq V$ is a \emph{$\delta$-net} of $V$ if (i) for any point $v \in V$, there exists a point $v' \in V$ such that $\| v - v' \| \le \delta$; and (ii) no two points in $V'$ are within 
$\delta$ distance. 
A $\delta$-net $V'$ can be easily constructed by a standard 
greedy approach by taking furthest points iteratively 
%(see e.g; \cite{Cla06}), 
or by more sophisticated and efficient methods as in~\cite{Cla06,HM06}). 

Now set $V_0 := V$. 
We first construct a sequence of point sets $V_k$, $k = 0, 1, \ldots, m$, such that $V_{k+1}$ is a $\frac{\alpha \eps^2}{2} (1+\eps)^{k-1}$-net of $V_{k}$. 
%where $\lambda < \frac{1}{2}$ and $\eps \geq \frac{2\lambda}{1-2\lambda}$. 
Consider the following vertex maps $\vmap_k: V_{k} \rightarrow V_{k+1}$, for $k \in [0, m-1]$, where for any $v \in V_k$, $\vmap_{k}(v)$ is the vertex in $V_{k+1}$ that is closest to $v$. 
Define $\hv_{k}: V_0 \rightarrow V_{k+1}$ as $\hv_k(v) = \vmap_k \circ \cdots \vmap_0(v)$. 

Each vertex map $\vmap_k$ induces a well-defined simplicial map $\hmap_k: \Rips^{\alpha(1+\eps)^k} (V_k) \rightarrow \Rips^{\alpha(1+\eps)^{k+1}}(V_{k+1})$.  Indeed, 
since $V_{k+1}$ is a $\frac{1}{2} \alpha \eps^2 (1+\eps)^{k-1}$-net of $V_{k}$, 
for each edge $e = \{u,v\}$ from $\Rips^{\alpha(1+\eps)^k}(V_k)$, we have 
\begin{eqnarray*} 
\| \vmap_k(u) - \vmap_k(v) \| &\le& \| u - v \| + \| u - \vmap(u) \| + \|v - \vmap(v) \|\\
& \le &\alpha (1+\eps)^k + \alpha \eps^2 (1+\eps)^{k-1} \\
& \leq &\alpha(1+\eps)^{k+1}.
\end{eqnarray*}
Hence $\vmap_k(u) \vmap_k(v)$ is an edge 
in $\Rips^{\alpha(1+\eps)^{k+1}}(V_{k+1})$. 
Since in a Rips complex, higher dimensional simplices are determined by the edges, every simplex $\{u_0, \ldots, u_d\}$ in $\Rips^{\alpha(1+\eps)^k}(V_k)$ has a well-defined image $\{ \vmap_k(u_0), \ldots, \vmap_k(u_d)\}$ in $\Rips^{\alpha(1+\eps)^{k+1}}(V_{k+1})$. 
Hence, each $\hmap_k$ is well-defined providing the filtration:
%These simplicial maps connect the following sequence of simplicial complexes: 
\begin{equation}
\Rips^\alpha(V_0) \stackrel{\hmap_0}{\longrightarrow} \Rips^{\alpha(1+\eps)}(V_1) \cdots \stackrel{\hmap_{m-1}}{\longrightarrow} \Rips^{\alpha(1+\eps)^m} (V_m). 
\label{eqn:simpfiltration}
\end{equation}

In other words, as the parameter $r = \alpha(1+\eps)^k$ increases, we can simply consider the Rips complex built upon the sparsified data points $V_k$. 
Note that the sequence above is not connected by inclusion maps and thus classical persistent algorithms cannot be applied directly;
%. In \cite{Sheehy}, 
%a different sequence connected by inclusion maps is obtained using
%a weighting scheme and a conglomeration technique; 
while our algorithm from Section \ref{SEC:ALGORITHM} can be used 
here in a straightforward manner. 

Our main observation is that the persistence diagram of the 
sequence of simplicial maps in  (\ref{eqn:simpfiltration}) approximates 
that of the inclusion maps in (\ref{eqn:ripsfiltration}). 
In particular, we show that the persistence modules induced by these two sequences interleave in the sense described in~\cite{CCGGO09}. 

First, we need maps to connect these two sequences. 
For this, we observe that the vertex map $\hv_k: V_0 \rightarrow V_{k+1}$ 
also induces a simplicial map 
$\hh_k: \Rips^{\alpha(1+\eps)^k} (V_0) \rightarrow \Rips^{\alpha(1+\eps)^{k+1}} (V_{k+1})$. 
To establish that this simplicial map is well-defined, it
can be shown that if there is an edge $\{u,v\}$ in $\Rips^{\alpha(1+\eps)^k} (V_0)$, then there is an edge $\hv_k(u) \hv_k(v)$ in $\Rips^{\alpha(1+\eps)^{k+1}} (V_{k+1})$. 
%Indeed, setting $\hv_{-1}(u) = u$ and $\hv_{-1}(v) = v$, we have: 
%\begin{eqnarray*}
%\| \hv_k(u) - \hv_k(v) \| 
%&\le & \| \hv_k(u) - u \| + \| \hv_k(v) - v \| + \| u - v \| \\
%&\le & \sum_{i=0}^k \| \hv_i(u) - \hv_{i-1}(u) \| 
% + \sum_{i=0}^k \| \hv_i(v) - \hv_{i-1}(v) \| + \| u - v \| \\
%&\le& \frac{\alpha\eps^2}{1+\eps} \sum_{i=0}^k (1+\eps)^i + \alpha (1+\eps)^k\\
%& \le& \alpha (1+\eps)^{k+1}. \\
%\end{eqnarray*}
%Hence $\hh_k$ is well-defined. 

\begin{claim}
Each triangle in the following diagram commutes at the homology level.  
\[
\xymatrix @R=1.5pc @C=1.5pc
{
\Rips^{\alpha(1+\eps)^{k}}(V_{0}) \ \ar@{^{(}->}[rr]^-{i_k} \ar[rrd]^{\hh_{k}}
&  & \  \Rips^{\alpha(1+\eps)^{k+1}}(V_{0}) \\
\Rips^{\alpha(1+\eps)^{k}}(V_{k}) \ \ar@{^{(}->}[u]^{j_k} \  \ar[rr]^{\hmap_k}  
&  & \  \Rips^{\alpha(1+\eps)^{k+1}}(V_{k+1}) \ \ar@{^{(}->}[u]^{j_{k+1}} \   
}
\]
Here, the maps $i_k$s and $j_k$s are canonical inclusions. The simplicial maps $\hh_k$ and $\hmap_k$ are induced by the vertex maps $\hv_k: V_0 \rightarrow V_{k+1}$ and $\vmap_k: V_k \rightarrow V_{k+1}$, respectively, as described before. 
\label{claim:trianglescommute}
\end{claim}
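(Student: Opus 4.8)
The plan is to treat the two triangles into which the diagonal $\hh_k$ splits the square separately. Both assertions are only at the homology level, so my main tool is the standard fact (already invoked in the proof of Proposition~\ref{zigzag}) that two \emph{contiguous} simplicial maps induce the same homomorphism on homology. I expect to show that the lower triangle commutes on the nose as a diagram of simplicial maps, while the upper triangle commutes only up to contiguity.

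For the lower triangle I would prove the stronger statement $\hh_k \circ j_k = \hmap_k$ as simplicial maps from $\Rips^{\alpha(1+\eps)^{k}}(V_k)$ to $\Rips^{\alpha(1+\eps)^{k+1}}(V_{k+1})$, for which it suffices to compare vertex maps. The key observation is that the nets are nested, $V_0 \supseteq V_1 \supseteq \cdots \supseteq V_m$, since each $V_{l+1}$ is by definition a subset of $V_l$. Hence for $v \in V_k$ and any $l < k$ we have $v \in V_{l+1}$, so the closest point of $V_{l+1}$ to $v$ is $v$ itself, giving $\vmap_l(v) = v$. Composing these identities yields $\hv_{k-1}(v) = v$, and therefore $\hv_k(v) = \vmap_k(\hv_{k-1}(v)) = \vmap_k(v)$. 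As $j_k$ is the identity on vertices, $(\hh_k \circ j_k)(v) = \hv_k(v) = \vmap_k(v) = \hmap_k(v)$, so the two maps coincide and the lower triangle commutes exactly.

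For the upper triangle I would show that $i_k$ and $j_{k+1} \circ \hh_k$, both simplicial maps from $\Rips^{\alpha(1+\eps)^{k}}(V_0)$ to $\Rips^{\alpha(1+\eps)^{k+1}}(V_0)$, are contiguous; homology-level commutativity then follows. On a vertex $v$ they give $v$ and $\hv_k(v)$, which differ in general, so strict commutativity is impossible and contiguity is genuinely needed. The first step is to bound the displacement $\|v - \hv_k(v)\|$ for $v \in V_0$ by tracking the trajectory $v \mapsto \vmap_0(v) \mapsto \vmap_1(\vmap_0(v)) \mapsto \cdots$; each step moves a point of $V_l$ to its nearest neighbour in the $\tfrac{\alpha\eps^2}{2}(1+\eps)^{l-1}$-net $V_{l+1}$, hence by at most $\tfrac{\alpha\eps^2}{2}(1+\eps)^{l-1}$. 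Summing the geometric series over $l = 0, \ldots, k$ gives $\|v - \hv_k(v)\| \le \tfrac{\alpha\eps}{2}(1+\eps)^{k}$. Then for any simplex $\{v_0, \ldots, v_d\}$ of $\Rips^{\alpha(1+\eps)^{k}}(V_0)$ I would verify that $\{v_0, \ldots, v_d\} \cup \{\hv_k(v_0), \ldots, \hv_k(v_d)\}$ spans a simplex of $\Rips^{\alpha(1+\eps)^{k+1}}(V_0)$, i.e.\ that all pairwise distances are at most $\alpha(1+\eps)^{k+1}$. Using $\|v_i - v_j\| \le \alpha(1+\eps)^{k}$, the displacement bound, and the triangle inequality, the worst case $\|\hv_k(v_i) - \hv_k(v_j)\| \le \alpha(1+\eps)^{k} + 2 \cdot \tfrac{\alpha\eps}{2}(1+\eps)^{k} = \alpha(1+\eps)^{k+1}$ exactly meets the threshold, and the mixed terms $\|v_i - \hv_k(v_j)\|$ are strictly smaller. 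This establishes contiguity, hence $i_{k*} = j_{(k+1)*} \circ \hh_{k*}$.

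The routine parts are the two triangle inequalities; the one step needing care is the displacement estimate, where the geometric growth of the net radii must be summed so that the accumulated error stays within the $(1+\eps)$ slack that the enlarged Rips parameter provides. I expect the main (though still elementary) effort to lie there, since the inequality is tight at the extreme pair; everything else reduces to the nestedness observation for the lower triangle and the definition of contiguity for the upper one.
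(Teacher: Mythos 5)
Your proposal is correct and follows essentially the same route as the paper: the lower triangle commutes exactly because $\hv_k$ restricted to $V_k$ agrees with $\vmap_k$ (your nestedness argument just makes explicit what the paper asserts), and the upper triangle commutes at the homology level because $j_{k+1}\circ\hh_k$ is contiguous to $i_k$, verified by the same geometric-series displacement bound and triangle inequalities. The only cosmetic difference is that you bound the $\|\hv_k(v_i)-\hv_k(v_j)\|$ case explicitly where the paper dismisses it as obvious; both are fine.
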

\begin{proof}
First, we consider the bottom triangle. 
Note that the vertex map $\hv_k$ restricted on the set of vertices $V_k$ is the same as the vertex map $\vmap_k$. (That is, for a vertex $u \in V_k \subseteq V_0$, $\hmap_k(u) = \hh_k(u)$.) Thus $\hmap_k = \hh_k \circ j_k$. Hence the bottom triangle commutes both at the simplicial complex level and at the homology level. 

Consider the top triangle. 
We claim that the map $j_{k+1} \circ \hh_{k}$ is contiguous to the inclusion $i_k$.
Since two contiguous maps induce the same homomorphisms at the homology level, 
the top triangle commutes at the homology level. 

This claim can be verified by the definition of contiguous maps.
Given a simplex $\sigma \in \Rips^{\alpha(1+\eps)^{k}}(V_0)$, we wish to show that vertices from $\sigma \cup \hh_k(\sigma)$ span a simplex in $\Rips^{\alpha(1+\eps)^{k+1}} (V_0)$. 
Since $\Rips^{\alpha(1+\eps)^{k+1}} (V_0)$ is a Rips complex, we only need to show that for any two vertices $u$ and $v$ from $\sigma \cup \widehat{h}_k(\sigma)$, the edge $uv$ has length less than $\alpha(1+\eps)^{k+1}$ (and thus in $\Rips^{\alpha(1+\eps)^{k+1}} (V_0)$). 
If $u$ and $v$ are both from $\sigma$ or both from $\hh_k(\sigma)$,
then obviously $\| u - v \| \leq \alpha(1+\eps)^{k+1}$. 
Otherwise, assume without loss of generality that $v \in \sigma$ and $u \in \hh_{k}(\sigma)$ where $u=\hv_k(\bar{u})$ for some $\bar{u} \in \sigma$.
It then follows that, 
\begin{eqnarray*}
\|u - v \| &\leq& \|u - \bar{u} \| + \| \bar{u} - v \| \\
&\leq & 
\frac{\alpha \eps^2}{2 (1+\eps)} \sum_{i=0}^{k} (1+\eps)^{i} + \alpha (1+\eps)^{k}
 <  \alpha (1+\eps)^{k+1}.
\end{eqnarray*} 
Therefore, the vertices of $\sigma \cup \widehat{h}_k(\sigma)$ span a 
simplex in $\Rips^{\alpha(1+\eps)^{k+1}}(V_{0})$.
\end{proof}
~\\

The above claim implies that the persistence modules induced by 
sequences (\ref{eqn:ripsfiltration}) and (\ref{eqn:simpfiltration}) are weakly $\frac{\log (1+\eps)}{2}$-interleaved at the log-scale. 
By Theorem 4.3 of \cite{CCGGO09}, we thus conclude with the following: 
\begin{proposition}
The persistence diagram of the sequence (\ref{eqn:simpfiltration}) provides a $\frac{3 \log (1+\eps)}{2}$-approximation of the persistence diagram 
of the sequence (\ref{eqn:ripsfiltration}) at the log-scale. 
\end{proposition}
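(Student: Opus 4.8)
The plan is to read an interleaving between the two persistence modules directly off Claim~\ref{claim:trianglescommute} and then invoke the stability theorem for weakly interleaved modules. First I would fix the indexing by reparametrizing both filtrations by the logarithm of the scale, setting $t_k = \log\bigl(\alpha(1+\eps)^k\bigr) = \log\alpha + k\log(1+\eps)$, so that advancing the integer index $k$ by one corresponds to an additive shift of $\log(1+\eps)$ in the parameter $t$; this converts the multiplicative spacing of (\ref{eqn:ripsfiltration}) and (\ref{eqn:simpfiltration}) into the additive structure required by the framework of \cite{CCGGO09}. Extending the integer-indexed sequences to $\mathbb{R}$-indexed step modules in the usual way, let $\mathbb{U}$ be the module of the full Rips filtration (\ref{eqn:ripsfiltration}) with $U_k = H_*(\Rips^{\alpha(1+\eps)^k}(V_0))$ and internal maps induced by the inclusions $i_k$, and let $\mathbb{V}$ be the sparsified module (\ref{eqn:simpfiltration}) with $V_k = H_*(\Rips^{\alpha(1+\eps)^k}(V_k))$ and internal maps induced by the simplicial maps $\hmap_k$. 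Both modules are finite-dimensional at every index and change at only finitely many indices, hence they are tame and their diagrams are well defined.

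The next step is to produce the interleaving maps. I would take the cross-maps to be $(\hh_k)_*\colon U_k \to V_{k+1}$ (a one-step upward shift) and $(j_k)_*\colon V_k \to U_k$ (no shift), both of which are supplied by Claim~\ref{claim:trianglescommute}. The two commuting triangles of that claim yield exactly the triangle identities $(j_{k+1})_* \circ (\hh_k)_* = (i_k)_*$ and $(\hh_k)_* \circ (j_k)_* = (\hmap_k)_*$; that is, each round trip between the modules equals the internal map that advances the index by one step. Since a full round trip shifts the log-parameter by $\log(1+\eps)$, and the interleaving parameter in \cite{CCGGO09} is half of the round-trip shift, these data constitute a weak $\tfrac{\log(1+\eps)}{2}$-interleaving of $\mathbb{U}$ and $\mathbb{V}$ at the log-scale. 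I would stress that only the triangles, and not the intermediate squares relating the cross-maps to the internal maps, are asserted to commute, which is precisely why the \emph{weak} rather than the strong interleaving hypothesis is the appropriate one.

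Finally, applying Theorem~4.3 of \cite{CCGGO09} to this weak $\tfrac{\log(1+\eps)}{2}$-interleaving bounds the bottleneck distance between the two persistence diagrams by $3\cdot\tfrac{\log(1+\eps)}{2} = \tfrac{3\log(1+\eps)}{2}$ on the log-scale, which is the claimed approximation guarantee.

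I expect the only real subtlety to lie in the bookkeeping of the shifts: one must check that the asymmetric cross-maps — one carrying a one-step shift, the other none — legitimately assemble into a symmetric weak interleaving with parameter $\tfrac{\log(1+\eps)}{2}$, and that extending the discrete ladders to $\mathbb{R}$-indexed modules introduces no spurious shifts or breaks tameness. The homotopy/contiguity content, namely that the triangles commute at the homology level, has already been discharged in Claim~\ref{claim:trianglescommute}, so beyond this interleaving bookkeeping no further topological work should be required.
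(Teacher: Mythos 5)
Your proposal is correct and follows essentially the same route as the paper: the paper likewise deduces a weak $\tfrac{\log(1+\eps)}{2}$-interleaving at the log-scale directly from the commuting triangles of Claim~\ref{claim:trianglescommute} and then invokes Theorem~4.3 of \cite{CCGGO09} to obtain the $\tfrac{3\log(1+\eps)}{2}$ bound. Your write-up merely spells out the reparametrization and shift bookkeeping that the paper leaves implicit.
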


Finally, since $V_{k+1}$ is a $\delta$-net of $V_k$ for $\delta = \frac{\alpha \eps^2}{2} (1+\eps)^{k-1}$, we can show by a standard packing argument that each 
$\Rips^{\alpha(1+\eps)^k} (V_k)$ is of size linear in $n$. See
Proposition C.1 in appendix. 
%\begin{lemma}
%Suppose the set of input points $V$ are from a metric space with doubling dimension $d$. For $V_k$s constructed as described above, the number of $p$-simplices in $\Rips^{\alpha(1+\eps)^{k+1}} (V_{k+1})$ is $O((\frac{1}{\eps})^{O(dp)} \cdot n)$ for $0\le \eps \le 1$. 
%\label{lem:size}
%\end{lemma}
%
Note that the persistence diagram of the simplicial maps in (\ref{eqn:simpfiltration}) can be computed by our algorithm in Section \ref{SEC:ALGORITHM}. 
Putting everything together, we have the following result. 

\begin{theorem}
Given a set of $n$ points $V$ in a metric space with doubling-dimension $d$ and $0\le \eps \le 1$, we can $\frac{3\log (1+\eps)}{2}$-approximate the persistence diagram of the Rips complex filtration (\ref{eqn:ripsfiltration}) by that of the filtration (\ref{eqn:simpfiltration}). The $p$-skeleton of each simplicial complex involved in (\ref{eqn:simpfiltration}) has size $O((\frac{1}{\eps})^{O(dp)} n)$. 

\end{theorem}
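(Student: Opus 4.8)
The plan is to assemble the final theorem from three ingredients already established in the paper, and then supply the one genuinely new piece, namely the linear-size bound on each complex in the filtration~(\ref{eqn:simpfiltration}). For the approximation quality, I would simply invoke the Proposition preceding this theorem: it states that the persistence diagram of~(\ref{eqn:simpfiltration}) is a $\frac{3\log(1+\eps)}{2}$-approximation of that of~(\ref{eqn:ripsfiltration}) at the log-scale. That proposition, in turn, rests on Claim~\ref{claim:trianglescommute} (the commuting ladder relating the two sequences), which yields the weak $\frac{\log(1+\eps)}{2}$-interleaving and then Theorem~4.3 of~\cite{CCGGO09}. So the first part of the theorem is immediate and I would state it as a direct consequence, contributing no new work.

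The computability assertion follows from the fact that~(\ref{eqn:simpfiltration}) is a monotone (non-zigzag) sequence of simplicial maps $\hmap_k$, each of which decomposes into elementary inclusions and elementary collapses by Proposition~\ref{elementary}. The annotation-based algorithm of Section~\ref{SEC:ALGORITHM}, justified by Propositions~\ref{element-just} and~\ref{prop:collapse}, processes exactly such a sequence and returns its persistence diagram. Hence I would note that the diagram of~(\ref{eqn:simpfiltration}) is computable by our algorithm, again with no new argument.

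The substantive step is the size bound $O\bigl((\tfrac{1}{\eps})^{O(dp)}\,n\bigr)$ on the $p$-skeleton of each $\Rips^{\alpha(1+\eps)^k}(V_k)$, which I expect to be the main obstacle and the place where a packing argument is needed (the paper defers this to Proposition~C.1 in the appendix). The key observation is that $V_{k+1}$ is a $\delta$-net of $V_k$ with $\delta = \frac{\alpha\eps^2}{2}(1+\eps)^{k-1}$, so the points of $V_k$ are $\delta$-separated, while the Rips parameter at stage $k$ is $r = \alpha(1+\eps)^k$. The ratio $r/\delta$ is $\Theta(1/\eps^2)$, a constant once $\eps$ is fixed. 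For any fixed vertex $u\in V_k$, every vertex forming an edge with $u$ lies in the ball of radius $r$ about $u$; by the standard packing bound in a space of doubling dimension $d$, the number of $\delta$-separated points in a radius-$r$ ball is $(r/\delta)^{O(d)} = (1/\eps)^{O(d)}$. Thus the degree of each vertex is $(1/\eps)^{O(d)}$, and since in a Rips complex every $p$-simplex on $u$ is determined by choosing $p$ neighbors of $u$, the number of $p$-simplices incident to $u$ is at most $\binom{(1/\eps)^{O(d)}}{p} = (1/\eps)^{O(dp)}$. Summing over the $|V_k|\le n$ vertices gives the claimed $O\bigl((\tfrac{1}{\eps})^{O(dp)}\,n\bigr)$ bound. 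The one subtlety to check carefully is that the separation $\delta$ I use is the separation within $V_k$ itself (ensured because $V_k$ is a net of $V_{k-1}$, hence $\delta'$-separated for the appropriate $\delta'$), and that the ratio of the current Rips radius to this separation is indeed $O(1/\eps^2)$; tracking the geometric factors $(1+\eps)^{k}$ so that they cancel correctly across the net parameter and the Rips parameter is the routine-but-delicate calculation. Combining the three parts completes the theorem.
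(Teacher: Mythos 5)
Your proposal is correct and follows essentially the same route as the paper: the approximation factor and computability are cited from the preceding proposition and the algorithm of Section~\ref{SEC:ALGORITHM}, and the size bound is established exactly as in the paper's Proposition~\ref{lem:size} --- a packing argument bounding the degree of each net vertex by $(1/\eps)^{O(d)}$ (since the ratio of the Rips radius to the net separation is $\Theta(1/\eps^2)$ in a doubling space), then counting $p$-simplices per vertex and summing over $n$ vertices. The indexing subtlety you flag (separation of $V_k$ versus $V_{k+1}$) is handled the same way in the paper, which states the bound for $\Rips^{\alpha(1+\eps)^{k+1}}(V_{k+1})$ so that the relevant ratio is identical.
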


%%%%%%%%%%%%%%%%%%%%%%%%%%%%%%%%%%%%%%%%%%%%%%
\subsection{Persistence diagram approximation by graph induced complex}
\label{subsec:GIC}
We now present an alternative way to construct a sequence of complexes 
for gradually sparsified or subsampled points. 
The \textit{graph induced complex} (GIC) proposed in~\cite{DFW13}
works on a subsample as the sparsified Rips complex does. However, it
contains much fewer simplices in practice. 
In~\cite{DFW13}, it was shown how GIC can be used to estimate the homology 
of compact sets by investigating the persistence of a single simplicial map. 
Here we show how one can build a sequence of GICs to approximate
the persistence diagram of a Rips filtration.
Similar to the case of a sequence of sparsified Rips complexes, simplicial maps occur naturally to connect these GICs in the sequence. 

\begin{definition} Let $G(V)$ be a graph with the
vertex set $V$ and let $\nu: V\rightarrow V'$ be a vertex
map where $\nu(V)=V'\subseteq V$ 
is a subset of vertices.
The graph induced complex $\G(V,V',\nu):=\G(G(V),V',\nu)$ is defined
as the simplicial complex where a $k$-simplex
$\sigma=\{v_1',v_2',\ldots,v_{k+1}'\}$
is in $\G(V,V',\nu)$ if and only if there exists a $(k+1)$-clique
$\{v_1,v_2,\ldots,v_{k+1}\}\subseteq V$ so that $\nu(v_i)=v_i'$ for
each $i\in \{1,2,\ldots,k+1\}$.
To see that it is indeed a simplicial complex,
observe that a subset of a clique is also a clique.
Let $G(V)$ be called the base-graph for $\G(V,V',\nu)$. 
\end{definition}
Intuitively, the vertex map $\nu$ maps a cluster of vertices from $V$ 
to a single vertex $v' \in V'$, and these vertices constitute the 
``Voronoi cell" of the site $v'$. The GIC $\G(V,V',\nu)$ is somewhat 
the combinatorial dual of such a Voronoi diagram.
In our case the base graph $G(V)$ is the $1$-skeleton of the Rips complex
$\Rips^{r}(V_0)$ and the vertex map $\nu$ is the
map $\hat{\pi}_k: V_0\rightarrow V_{k+1}$ as defined
in the last section. 
Denote $\G^{r}(V_0,V_k):=\G(V_0,V_k,\hat{\pi}_{k-1})$ constructed using the 1-skeleton of $\Rips^r(V_0)$ as the base-graph.
%We show that there are simplicial maps $f_k:
%\G^\alpha(V_0, {V}_k) \longrightarrow \G^{\alpha(1+\eps)}(V_0, {V}_k)$
It is easy to show that by the definition of $\hat{\pi}_{k}$ and construction of $V_k$s, the vertex map 
$\vmap_k: V_k \rightarrow V_{k+1}$
 induces a well-defined simplicial map 
$f_k: \G^{\alpha(1+\eps)^{k-1}}(V_0, {V}_k) \longrightarrow \G^{\alpha(1+\eps)^k}(V_0, {V}_{k+1})$, 
giving rise to the following sequence: 
\begin{eqnarray}
\begin{split}
\G^\alpha(V_0, {V}_1) \stackrel{f_1}{\longrightarrow} \G^{\alpha(1+\eps)}(V_0, {V}_2) \stackrel{f_2}{\longrightarrow} 
\G^{\alpha(1+\eps)^{2}}(V_0, {V}_3) \cdots \\
\stackrel{f_{m-1}}{\longrightarrow} \G^{\alpha(1+\eps)^{m-1}} (V_0, {V}_{m}). 
\end{split}
\label{eqn:gicFiltration}
\end{eqnarray}
We prove that the persistence diagram of the above filtration induced 
by simplicial maps $f_k$'s has the same approximation 
factor to the persistence of diagram of the filtration (\ref{eqn:ripsfiltration}) as that of the filtration (\ref{eqn:simpfiltration}).
Thus, we have:
\begin{theorem}
Given a set of $n$ points $V$ in a metric space with doubling-dimension $d$ and $0\le \eps \le 1$, we can $\frac{3\log (1+\eps)}{2}$-approximate the persistence diagram of the Rips complex filtration (\ref{eqn:ripsfiltration}) by that of the filtration (\ref{eqn:gicFiltration}). The $p$-skeleton of each simplicial complex involved in (\ref{eqn:gicFiltration}) has size $O((\frac{1}{\eps})^{O(dp)} n)$. 
\label{thm:GIC}
\end{theorem}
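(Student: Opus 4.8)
The plan is to reproduce, for graph induced complexes, the argument template of Section~\ref{subsec:sparsifyRips}: build maps connecting the GIC filtration (\ref{eqn:gicFiltration}) to the Rips filtration (\ref{eqn:ripsfiltration}), show that the resulting diagram commutes at the homology level, conclude that the two persistence modules are weakly $\frac{\log(1+\eps)}{2}$-interleaved at the log-scale, and then invoke Theorem~4.3 of \cite{CCGGO09} verbatim to turn this interleaving into the stated $\frac{3\log(1+\eps)}{2}$-approximation of persistence diagrams. The size bound will come separately from a doubling-dimension packing argument, exactly as in Proposition~C.1. Since the interleaving parameter is identical to that of the sparsified Rips sequence (\ref{eqn:simpfiltration}), the approximation factor will match.

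First I would exhibit the two families of connecting simplicial maps. In the GIC-ward direction, the vertex map $\hat{\pi}_{k-1}: V_0\rightarrow V_k$ that already defines $\G^{\alpha(1+\eps)^{k-1}}(V_0,V_k)$ induces a natural map $g_k: \Rips^{\alpha(1+\eps)^{k-1}}(V_0)\rightarrow \G^{\alpha(1+\eps)^{k-1}}(V_0,V_k)$: because the base graph of this GIC is precisely the $1$-skeleton of $\Rips^{\alpha(1+\eps)^{k-1}}(V_0)$, every Rips simplex is a clique in the base graph, so its image under $\hat{\pi}_{k-1}$ is a GIC simplex by definition. In the Rips-ward direction, I would use the vertex inclusion $V_k\subseteq V_0$ together with a radius expansion: if $\{v_1',v_2'\}$ is an edge of $\G^{\alpha(1+\eps)^{k-1}}(V_0,V_k)$, it arises from a base-graph edge $\{v_1,v_2\}$ with $\|v_1-v_2\|\le \alpha(1+\eps)^{k-1}$, and since each $v_i$ lies within the accumulated net displacement of its image $v_i'=\hat{\pi}_{k-1}(v_i)$, the same geometric-series estimate used in Claim~\ref{claim:trianglescommute} gives $\|v_1'-v_2'\|\le \alpha(1+\eps)^{k}$. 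This yields a simplicial map $\bar g_k: \G^{\alpha(1+\eps)^{k-1}}(V_0,V_k)\rightarrow \Rips^{\alpha(1+\eps)^{k}}(V_0)$.

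Next I would assemble $g_k$, $\bar g_k$, the GIC maps $f_k$, and the Rips inclusions $i_k$ into the interleaving diagram and check commutativity at the homology level via contiguity, since two contiguous maps induce equal homomorphisms on homology. Concretely, the composite $\bar g_k\circ g_k$ has vertex map $v\mapsto \hat{\pi}_{k-1}(v)$ and I would show it is contiguous to the inclusion $i_k:\Rips^{\alpha(1+\eps)^{k-1}}(V_0)\hookrightarrow \Rips^{\alpha(1+\eps)^{k}}(V_0)$; this reduces to verifying that $\sigma\cup\hat{\pi}_{k-1}(\sigma)$ spans a Rips simplex, which is exactly the pairwise-distance computation already carried out in Claim~\ref{claim:trianglescommute}. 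Symmetrically, $g_{k+1}\circ \bar g_k$ must be shown contiguous to $f_k$. Feeding the resulting weak $\frac{\log(1+\eps)}{2}$-interleaving into Theorem~4.3 of \cite{CCGGO09} then gives the approximation factor. For the size bound, every $p$-simplex of $\G^{\alpha(1+\eps)^{k-1}}(V_0,V_k)$ has its $p+1$ vertices in $V_k$ pairwise within $O(\alpha(1+\eps)^{k})$ (by the edge estimate above), while $V_k$ is a $\Theta(\eps^2\alpha(1+\eps)^{k})$-net; hence these vertices lie in a ball containing at most $(1/\eps)^{O(d)}$ points of $V_k$ by doubling-dimension packing, and summing the $(1/\eps)^{O(dp)}$ incident $p$-simplices over the at most $n$ vertices yields the claimed $O((1/\eps)^{O(dp)}n)$ bound, just as in Proposition~C.1.

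The main obstacle I expect is the contiguity of $g_{k+1}\circ \bar g_k$ with $f_k$. Unlike the pure Rips setting of Claim~\ref{claim:trianglescommute}, where contiguity collapses to checking pairwise distances, here the target is a GIC, so I must certify that the union of the images of each source simplex under the two differing vertex maps is \emph{again} a GIC simplex; by definition this requires exhibiting a single clique in the base graph (the $1$-skeleton of $\Rips^{\alpha(1+\eps)^{k}}(V_0)$) whose $\hat{\pi}_{k}$-image simultaneously realizes all the required vertices. Establishing the existence of such a witnessing clique, rather than merely bounding distances between the target vertices, is the delicate step, and it is precisely where the structure of the base graph and the net properties of the $V_k$'s must be used together.
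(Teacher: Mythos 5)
Your plan is essentially the paper's own proof: the paper sets up exactly this diagram (Claim~\ref{claim:giccommute}), with your $g_k$ being the map $\hat{f}_{k-1}$ induced by $\hat{\pi}_{k-1}$ and your $\bar{g}_k$ being the canonical inclusion $j_k$ justified by the same radius-expansion estimate, and then invokes Theorem~4.3 of \cite{CCGGO09} together with the packing bound of Proposition~\ref{lem:size}. The one step you flag as delicate is in fact immediate: since $\hat{\pi}_k$ restricted to $V_k$ equals $\pi_k$, the maps $g_{k+1}\circ \bar{g}_k$ and $f_k$ have identical vertex maps and the bottom triangle commutes strictly at the simplicial level --- the witnessing clique for the image of a GIC simplex is that simplex itself, which lies in the $1$-skeleton of $\Rips^{\alpha(1+\eps)^{k}}(V_0)$ (the base graph of the target GIC) by your own edge estimate, so the only genuine contiguity argument is the Rips-level top triangle already handled in Claim~\ref{claim:trianglescommute}.
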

\begin{proof}
In sequence (\ref{eqn:gicFiltration}),
${V}_{k+1}$ is a $\delta_{k+1}$-net of $V_k$ for $\delta_{k+1} = \frac{\alpha \eps^2}{2} (1+\eps)^{k-1}$ ($k=0,1, \ldots, m-1$) as
in the sequence of (\ref{eqn:simpfiltration}).
Now consider $\hv_k: V_0 \rightarrow V_{k+1}$.
It is immediate that $|p\hv_{k}(p)| \leq \frac{\alpha \eps^2}{2 (1+\eps)} \sum_{i=0}^{k}
(1+\eps)^{i} \leq \frac{\alpha \eps}{2} (1+\eps)^{k}$ for each $p \in V_0$.
In other words, $V_{k+1}$ is a $\frac{\alpha \eps}{2} (1+\eps)^{k}$-sample of $V_0$.
%By definition, the graph $G(V_0)$ for $\G^{\alpha(1+\eps)^{k}}(V_0, {V}_{k+1})$ is the $1$-skeleton of $\Rips^{\alpha(1+\eps)^{k}}(V_0)$.
Recall the GIC $\G^{\alpha(1+\eps)^{k}}(V_0, {V}_{k+1})$ is constructed based on the $1$-skeleton of $\Rips^{\alpha(1+\eps)^{k}}(V_0)$ (used as the base-graph). 
It is easy to show that $\hv_{k}(p)$ induces a simplicial map $\hat{f}_{k} : \Rips^{\alpha(1+\eps)^{k}}(V_0) \rightarrow \G^{\alpha(1+\eps)^{k}}(V_0, {V}_{k+1})$.
To prove that the persistence diagram of the sequence (\ref{eqn:ripsfiltration})
is approximated by that of the sequence (\ref{eqn:gicFiltration}),
it is sufficient to show that the sequence (\ref{eqn:ripsfiltration}) interleaves with
the sequence (\ref{eqn:gicFiltration}).
The following claim
%s similar to Claim \ref{claim:trianglescommute}
reveals the desired interleaving property. Its proof is similar to that
of the Claim~\ref{claim:trianglescommute}.
\begin{claim}
Each triangle in the following diagram commutes at the homology level.
\[
\xymatrix @R=1.5pc @C=1.5pc
{
\Rips^{\alpha(1+\eps)^{k}}(V_{0}) \ \ar@{^{(}->}[rr]^-{i_k} \ar[rrd]^{\hat{f}_{k}}
&  & \  \Rips^{\alpha(1+\eps)^{k+1}}(V_{0}) \\
\G^{\alpha(1+\eps)^{k-1}}(V_0, {V}_{k}) \ \ar@{^{(}->}[u]^{j_k} \  \ar[rr]^{f_k} 
&  & \  \G^{\alpha(1+\eps)^{k}}(V_0, {V}_{k+1}) \ \ar@{^{(}->}[u]^{j_{k+1}} \
}
\]
Here, the maps $i_k$s and $j_k$s are canonical inclusions.
The simplicial map $\hat{f}_k$ is induced by the vertex map $\hat{\pi}_{k} : V_0 \rightarrow {V}_{k+1}$, and the simplicial map $f_k = \hat{f}_k \circ j_k$.
\label{claim:giccommute}
\end{claim}
\end{proof}
~\\

Note that for every edge $uv$ in $\G^{\alpha(1+\eps)^{k}}(V_0, {V}_{k+1})$,
there is an edge $ab$ in $\Rips^\alpha(1+\eps)^{k}(V_0)$ such that $\hat{\pi}_{k}(a) = u$ and  $\hat{\pi}_{k}(b) =v$.
Since ${V}_{k+1}$ is a  $\frac{\alpha \eps}{2} (1+\eps)^{k}$-sample of $V_0$, one has that
$$|uv| \leq |ua| + |vb| + |ab| \leq \alpha \eps (1+\eps)^{k} + \alpha(1+\eps)^{k} = \alpha(1+\eps)^{k+1}.$$
Therefore, the 1-skeleton of the graph induced complex 
$\G^{\alpha(1+\eps)^{k}}(V_0, {V}_{k+1})$ is a
subcomplex of the $1$-skeleton of 
$\Rips^{\alpha(1+\eps)^{k+1}}({V}_{k+1})$.
Consequently, $\G^{\alpha(1+\eps)^{k}}(V_0, {V}_{k+1})$ is a subcomplex of
$\Rips^{\alpha(1+\eps)^{k+1}}({V}_{k+1})$
which is the maximal simplicial complex containing its $1$-skeleton.
This observation implies that the sequence (\ref{eqn:gicFiltration}) has
smaller size compared to
the sequence (\ref{eqn:simpfiltration}).
Furthermore, although the asymptotic space complexity of each GIC is the 
same as that of the sparsified Rips complex, in practice, the size of 
GICs can be far smaller; see \cite{DFW13}. 
However, the construction of each GIC is more expensive, as one needs to compute each $\G^{\alpha(1+\eps)^{k}} (V_0, {V}_{k+1})$ from the Rips complex $\Rips^{\alpha(1+\eps)^{k}}(V_0)$ built on the original vertex set $V_0$, instead of 
the vertex set from the previous complex 
$\G^{\alpha(1+\eps)^{k-1}} (V_0, {V}_{k})$.
Hence there is a trade-off of space versus time 
for the approaches given in 
Section \ref{subsec:sparsifyRips}.
%and Section~\ref{subsec:GIC}.

%%%%%%%%%%%%%%%%%%%%%%%%%%%%%%%%%%%%%%%%%%%%%
\section{Conclusions}
In this paper, we studied algorithms to compute the persistence diagram of a (monotone) filtration connected by simplicial maps efficiently. 
%Various versions of the standard persistence algorithms have been previously proposed for filtrations induced by inclusion maps (see \cite{ELZ02,DMV11,DMV11b} for computing non-zigzag persistence, and \cite{CSM09} for computing the zigzag persistence). 
As discussed in \cite{DMV11b}, 
the algorithm based on the cohomology view 
%(with some optimizations in the maintenance of a basis) 
in \cite{DMV11} has a good practical performance for
the case of computing inclusion-induced non-zigzag persistence.
Our annotation-based algorithm extends such a view of maintaining
an appropriate cohomology basis to the case of vertex collapses.
This allows us to compute the persistence diagram for a filtration 
connected by simplicial maps directly and efficiently. 
The coning approach in Section~\ref{sec:simulate} works for any finite
fields though the collapse based algorithm in Section~\ref{SEC:ALGORITHM}
currently works 
with $\mathbb{Z}_2$ coefficients only.
Although inclusions can be handled under other finite field
coefficients, it is not clear how to handle collapses efficiently. 
%As an application we show that such a filtration
%does arise in 
%approximating the persistence diagram of 
%a Rips complex filtration that also incorporates vertex collapses. 

%Currently, the improved algorithm based on annotations works only for 
%filtrations induced by a sequence simplicial maps in
%a monotone direction. It would be interesting to see whether 
%this approach can be extended to handle efficiently a zigzag sequence of 
%simplicial maps as well. 

We believe that, as the scope of topological data analysis 
continues to broaden, further applications based on simplicial maps will arise. 
Currently, an efficient implementation of the persistence algorithm taking
advantage of the compressed representation of
annotations has been reported in~\cite{BDM13}. 
We have also developed an efficient implementation of the persistence
algorithms for simplicial maps in the same vein. The software
named {\sf SimpPers} is available from authors' web-pages.

%\end{document}  % This is where a 'short' article might terminate

%ACKNOWLEDGMENTS are optional
%\section{Acknowledgments}
%This section is optional; 

\section*{Acknowledgment} We acknowledge the helpful comments of the
reviewers and the support of the NSF grants
CCF 1116258, CCF 1064416, CCF 1318595, and CCF 1319406.

\appendix
\section{Missing proof from Section \ref{SEC:SIMPLICIALMAPS}} 
\label{appendix:A}

\textbf{Proof of Proposition~\ref{elementary}.}
Consider the surjective simplicial map $f': K \rightarrow f(K)$ defined as $f'(\sigma) = f(\sigma)$ for any $\sigma \in K$.
Writing $V=V(K)$ and $V'=V(K')$, we have $f'_V=f_V$.
%induced by $\fv$; that is, for any $\sigma \in K$ spanned by $u_0, \ldots, u_d$, $f'(\sigma)$ is spanned $\fv(u_0), \ldots, \fv(u_d)$.
The simplicial map $f: K \rightarrow K'$ is a composition
$i \circ f'$, where $i: f(K) \hookrightarrow K'$ is the canonical inclusion $f(K)\subseteq K'$.
Obviously, the inclusion $i$ can be easily decomposed into a sequence of elementary inclusions.
We now show that $f'$ can be decomposed into a sequence of elementary collapses. 

Let  $A:= \{ v \in V' \mid | \fv^{-1}(v) | > 1 \}$.
Hence $\fv$ maps injectively onto $V' \setminus A$.
Order vertices in $A$ arbitrarily as $\{ v_1, \ldots, v_k \}$, $k = |A|$, and let $A_i$ denote $\fv^{-1} (v_i)$.
We now define $f_i$ and $K_i$ in increasing order of $i$.
For the base case, set $K_0 = K$.
For any $i > 0$, consider the vertex map $f_{\mathrm{A}_i}$ which is the injective map on $V(K_{i-1}) \setminus A_i$, but maps $A_i$ to $v_i$.
We set $f_i$ to be the simplicial map induced by this vertex map $f_{\mathrm{A}_i}$, and set $K_i := f_i(K_{i-1})$. By construction, $f_i$ is a 
surjective simplicial map.

It is easy to see that the vertex map $f_{\mathrm{A}_k} \circ \cdots \circ f_{\mathrm{A}_1}$ equals $\fv$. Hence, the induced simplicial map $f_k \circ \cdots \circ f_1 : K \rightarrow K_k$ equals $f': K \rightarrow f(K)$.
% thus satisfies that $f'(\sigma) = f(\sigma)$ for any $\sigma \in K$, and $K_k$ obtained this way equals $f(K)$.
Furthermore, each $f_i$ can be decomposed into a sequence of elementary collapses, each induced by a vertex map that maps only two vertices from $A_i$ to $v_i$.

\begin{proposition}
The simplicial maps $i'\circ f$ and $i$ are contiguous.
\label{contig}
\end{proposition}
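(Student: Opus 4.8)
The plan is to unwind the definition of contiguity and reduce the whole statement to a single membership fact about the cone. Recall that two simplicial maps $g_1, g_2 : K \rightarrow \hat{K}$ are contiguous exactly when $g_1(\sigma) \cup g_2(\sigma)$ spans a simplex of $\hat{K}$ for every $\sigma \in K$. In our situation $g_1 = i$ is the canonical inclusion, which is the identity on vertices, so $i(\sigma) = \sigma$; and $g_2 = i' \circ f$ is simply the elementary collapse $f$ read inside $\hat{K}$, so $(i' \circ f)(\sigma) = f(\sigma)$, where $f$ sends $v$ to $u$ and fixes every other vertex. Hence the proposition is equivalent to showing that $\sigma \cup f(\sigma)$ is a simplex of $\hat{K}$ for each $\sigma \in K$.

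First I would dispose of the easy case $v \notin \sigma$. Then $f$ acts as the identity on the vertices of $\sigma$, so $f(\sigma) = \sigma$ and $\sigma \cup f(\sigma) = \sigma \in K \subseteq \hat{K}$, which is immediate.

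The interesting case is $v \in \sigma$. Here I would write $\sigma = \{v\} \cup \sigma_0$ with $\sigma_0 = \sigma \setminus \{v\}$ and observe that $f(\sigma) \subseteq (\sigma \setminus \{v\}) \cup \{u\} \subseteq \sigma \cup \{u\}$, regardless of whether $u$ already belongs to $\sigma$. Consequently $\sigma \cup f(\sigma) \subseteq \sigma \cup \{u\}$, so it suffices to certify that $\sigma \cup \{u\}$ is a simplex of $\hat{K}$, since any face of a simplex of $\hat{K}$ is again in $\hat{K}$. Because $v \in \sigma \in K$, the simplex $\sigma$ lies in $\st v$, hence in its closure $\cst{v}$. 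By the very definition of the cone, $u * \cst{v}$ contains $\overline{\sigma \cup \{u\}}$, so $\sigma \cup \{u\} \in u * \cst{v} \subseteq \hat{K}$, which completes this case.

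I expect the only genuine subtlety to be the bookkeeping around the vertex $u$: one must keep the argument uniform whether or not $u \in \sigma$, which is precisely why I would phrase the crucial step as the containment $\sigma \cup f(\sigma) \subseteq \sigma \cup \{u\}$ rather than computing $f(\sigma)$ exactly. Beyond that, the proof is a direct appeal to the cone construction $\hat{K} = K \cup (u * \cst{v})$ together with the downward closure of simplicial complexes; no homology or further contiguity machinery is needed, so there is no serious obstacle once the reduction to cone membership has been made.
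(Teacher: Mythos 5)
Your proof is correct and follows essentially the same route as the paper: reduce contiguity to showing $\sigma \cup f(\sigma) \in \hat{K}$, split on whether $v \in \sigma$, and in the nontrivial case observe that $\sigma \in \cst{v}$ so the cone $u * \cst{v}$ supplies $\sigma \cup \{u\}$ (and hence all its faces) in $\hat{K}$. Your explicit handling of the containment $\sigma \cup f(\sigma) \subseteq \sigma \cup \{u\}$, which covers the case $u \in \sigma$ uniformly, is a minor tidiness improvement over the paper's direct computation but not a different argument.
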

\begin{proof}
By definition of contiguous maps, we need to show that for any simplex $\sigma \in K$, $i(\sigma) \cup i' \circ f(\sigma)$ is a simplex in $\hat{K}$. Note that $i(\sigma) = \sigma$.

First assume that the simplex $\sigma$ is not in $\st v$.
Since $f$ is an elementary collapse, we have $f(\sigma) = \sigma$ and $i' \circ f (\sigma) = \sigma$. Hence, $i(\sigma) \cup (i' \circ f)(\sigma)$ equals $\sigma$ which is also a simplex in $\hat{K}$.
%which is the same as $i(\sigma)$ in $\hat{K}$.

Now assume that $\sigma \in \st v$, and
$\sigma=\{u_0, \ldots, u_d \} \cup \{v\}$.
Since $f(v) = \newfixpt$, $f(\sigma)=\{u_0, \ldots, u_d \} \cup \{ \newfixpt \}$, and so is $(i' \circ f)(\sigma)$.
Hence the union of  $i(\sigma)$ and $(i' \circ f)(\sigma)$ is
$ B:= \{u_0, \ldots, u_d \} \cup  \{ u,\newfixpt \}$,
which is the simplex $\sigma\cup\{u\}$.
On the other hand, by construction of $\hat{K}$,
the simplex $\sigma \cup \{\newfixpt\}$ is necessarily in $\hat{K}$.
Hence $i(\sigma) \cup (i' \circ f)(\sigma)$ is a simplex in $\hat{K}$ in this case too. Hence the maps $i$ and $i' \circ f$ are contiguous.
\end{proof}

\begin{proposition}
$i_*': H_*(K')\rightarrow H_*(\hat{K})$ is an isomorphism.
\label{inclu-iso}
\end{proposition}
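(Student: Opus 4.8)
The plan is to exhibit an explicit simplicial retraction $\rho\colon \hat{K}\to K'$ and to show that it is a homotopy inverse of $i'$ at the homology level, so that $i'_*$ is an isomorphism with inverse $\rho_*$. Conceptually this just says that $\hat{K}=K\cup(\newfixpt*\cst v)$ is a simplicial mapping cylinder of $f$ and therefore deformation retracts onto the target $K'$; the map $\rho$ is the concrete combinatorial realization of that retraction.

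First I would define $\rho$ on vertices by $\rho(v)=\newfixpt$ and $\rho(w)=w$ for every other vertex $w$; note that $\rho$ agrees on $V(K)$ with the vertex map inducing the collapse $f$, and that $V(\hat{K})=V(K)$ since coning with $\newfixpt$ introduces no new vertex. The first thing to check is that $\rho$ is a well-defined simplicial map $\hat{K}\to K'$, i.e. that $\rho(\tau)\in K'$ for every $\tau\in\hat{K}$. For $\tau\in K$ this is immediate, since $\rho|_{K}$ is exactly the vertex map of $f$ and hence $\rho(\tau)=f(\tau)\in K'$. For a simplex of the form $\tau=\zeta\cup\{\newfixpt\}$ with $\zeta\in\cst v$ coming from the cone, I would argue $\rho(\tau)=f(\zeta)\cup\{\newfixpt\}\in K'$ by cases: if $v\in\zeta$ then $\newfixpt\in f(\zeta)$ and $f(\zeta)\cup\{\newfixpt\}=f(\zeta)$; if $v\notin\zeta$ then membership $\zeta\in\cst v$ forces $\zeta\cup\{v\}\in K$, whence $f(\zeta\cup\{v\})=f(\zeta)\cup\{\newfixpt\}\in K'$.

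Next, observe that $\rho\circ i'=\mathrm{id}_{K'}$ as simplicial maps, because the vertices of $K'$ do not include $v$ and $\rho$ fixes all of them; this gives $\rho_*\circ i'_*=\mathrm{id}_{H_*(K')}$ at once. It remains to prove $i'_*\circ\rho_*=\mathrm{id}_{H_*(\hat{K})}$, which I would obtain by showing that $i'\circ\rho$ is contiguous to $\mathrm{id}_{\hat{K}}$, since contiguous maps induce the same homomorphism on homology (the fact already invoked in the proof of Proposition~\ref{zigzag}). Concretely, for $\tau\in\hat{K}$ I must check that $\tau\cup\rho(\tau)$ is a simplex of $\hat{K}$. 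If $v\notin\tau$ then $\rho(\tau)=\tau$ and there is nothing to prove; if $v\in\tau$ then $\tau\cup\rho(\tau)=\tau\cup\{\newfixpt\}$, and the only nontrivial situation is $\tau\in K$ with $\newfixpt\notin\tau$, in which case $\tau\in\cst v$ and so $\tau\cup\{\newfixpt\}\in\hat{K}$ directly by the definition of the cone $\newfixpt*\cst v$.

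The two relations $\rho_*\circ i'_*=\mathrm{id}$ and $i'_*\circ\rho_*=\mathrm{id}$ together show that $i'_*$ is an isomorphism. The step I expect to require the most care is the bookkeeping behind well-definedness of $\rho$ on the coned simplices and the contiguity verification; both hinge on the precise way $\hat{K}=K\cup(\newfixpt*\cst v)$ supplies, for each simplex containing $v$, the partner simplex obtained by adjoining $\newfixpt$, so that replacing $v$ by $\newfixpt$ never leaves $\hat{K}$ and the union with the original simplex stays a simplex.
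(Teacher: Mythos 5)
Your proof is correct and takes essentially the same route as the paper's: your retraction $\rho$ is exactly the projection $\pi:\hat{K}\to K'$ (induced by the vertex map $v\mapsto u$, identity elsewhere) used in the paper, and both arguments establish $\rho\circ i'=\mathrm{id}_{K'}$ together with the contiguity of $i'\circ\rho$ and $\mathrm{id}_{\hat{K}}$. Your case analyses for well-definedness and contiguity match the paper's and are sound.
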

\begin{proof}
Consider the projection map $\pi: \hat{K}\rightarrow K'$ induced
by the vertex map
\begin{eqnarray*}
\pi(p)& =~~  \begin{cases}
\newfixpt &\mbox{ if $p =v$}\\
  p &\mbox{ otherwise.}
\end{cases}
\end{eqnarray*}
Let $\id_{\hat{K}}$ denote the identity map on $\hat{K}$.
We show that: (i) $\pi$ is an elementary collapse, and (ii) the composition $i' \circ \pi$ and $\id_{\hat{K}}$ are contiguous.
%It then follows from Lemma 3.3 of \cite{Sheehy} that $i'_*$ is an isomorphism.
It is easy to see that $\pi\circ i'$ is $\id_{K'}$. Then, $i'$ is a
(simplicial) homotopy equivalence and hence $i'_*$ is an isomorphism.

Specifically, consider an arbitrary simplex $\sigma \in \hat{K}$.
Let $\hat{X} = \st\{u, v\}$ be the star of $\{u,v\}$ in $\hat{K}$.
If $\sigma \notin \hat{X}$, then by the construction of $\hat{K}$, $\sigma \in K'$. In other words, $\pi(\sigma)=\sigma$ indeed exists in $K'$ in this case. Furthermore, $i' \circ \pi(\sigma) = \sigma$ and thus $(i' \circ \pi)(\sigma) \cup \id_{\hat K}(\sigma) = \sigma \in \hat{K}$.

Now consider the case $\sigma \in \hat{X}$, and assume that $\sigma=\{ u_0, \ldots, u_d \} \cup A$ with $A \subseteq \{ u,v\}$.
To show that $\pi$ is well defined, we need to show that
$\pi(\sigma)=\{ u_0, \ldots, u_d, \newfixpt \}$, is indeed a simplex in $K'$.

(i) If $\newfixpt \notin A$, then by the construction of $\hat{K}$, $\sigma$ has a pre-image in $K$ under the inclusion $i: K \rightarrow \hat{K}$. Hence $\sigma$ must also be a simplex in $K$, and under the map $f$, it is mapped to the
simplex $\{ u_0, \ldots, u_d, \newfixpt \}$ in $K'$. As such, $\pi(\sigma)$ exists in $K'$ in this case.
(ii) If $\newfixpt \in A$, then the simplex $\sigma''=\{u_0, \ldots, u_d \} \cup (A \setminus \{\newfixpt \})$ must exist in the closed star
$\overline{{\st}_{K} \{u, v\}}$ of $\{u,v\}$ in $K$.
%The simplex $\sigma''$ is in $K$ and
Hence $K$ contains a simplex $\sigma'' \cup \{x\}$ with $x$ being
either $u$ or $v$.
Under the map $f$, the image of $\sigma'' \cup \{x\}$ in $K'$
is $\{ u_0, \ldots, u_d, \newfixpt \}$,
hence $\pi(\sigma)$ is well-defined in $K'$ in this case too.

Furthermore, in both (i) and (ii) above, $(i' \circ \pi)(\sigma) = \pi(\sigma)$, and $\pi(\sigma)$ is a face of the simplex $\sigma$. Hence
$(i' \circ \pi)(\sigma) \cup \id_{\hat K}(\sigma) = \sigma \in \hat{K}$.
Putting everything together, we have that $i' \circ \pi$ and the identity map $\id_{\hat{K}}$ are contiguous.
%\begin{eqnarray*}
%\pi(p)& = & \fixpt \mbox{ if $p=v$}\\
% & = & p \mbox{ otherwise.}
%\end{eqnarray*}
%The map $\pi$ is a homotopy equivalence. Its homotopy inverse is
%the inclusion $i'$. Therefore, $i'$ is also a homotopy equivalence.
%It follows that $i'_*$ is an isomorphism.
\end{proof}

\section{Missing Details from Section~\ref{SEC:JUSTIFICATION}}
\label{appendix:justification}

\paragraph{Proof of Proposition~\ref{prop:correctbasis}}
Let $\ann^i_z$ denote the annotation of a cycle $z$ in $K_i$.

Case (i) of elementary inclusion: For $k\neq p$, any $k$-cycle in $K_{i+1}$ was
a $k$-cycle in $K_i$ and the annotations for $k$-simplices are not
altered for $k\neq p$. So, a valid annotation of $K_i$ for $k\neq p$
remains so after inclusion of a $p$-simplex $\sigma$. Now consider
two $p$-cycles $z$ and $z'$ in $K_{i+1}$. We need to show that
$\ann^{i+1}_{z}=\ann^{i+1}_{z'}$ if and only if $[z]=[z']$.

Let $[z]=[z']$.
If $z$ does not include $\sigma$,
neither does $z'$ and hence both exist in $K_i$.
In this case
$$\ann^{i+1}_z=[\ann^i_z,0]=[\ann^i_{z'},0]=\ann^{i+1}_{z'}$$
since all $p$-simplices other than $\sigma$ gets the same zero element 
appended to their annotations while going from $K_i$ to
$K_{i+1}$. Now consider the case where $z$ includes
$\sigma$. Then, $z'$ also includes $\sigma$.
There is a $p+1$-chain, say $D$, so that $\partial D= z+z'$.
It follows that $\partial D = (z+ \sigma) + (z'+\sigma)$. The $p$-chains
$c=z+\sigma$ and $c'=z'+\sigma$ do not include $\sigma$ since it gets
canceled under $\mathbb{Z}_2$-additions. The $p$-cycle $c+c'=\partial D$
is identity in $H_p(K_i)$ being a boundary. Therefore, its annotation
is zero in $K_i$ giving that $\ann_c=\ann_{c'}$ in $K_i$ and hence
in $K_{i+1}$. It follows that $z=c+\sigma$ and $z'=c'+\sigma$ have
identical annotations in $K_{i+1}$.

Now suppose that $[z]\neq [z']$. If none of $z$ and $z'$ include
$\sigma$, they exist in $K_i$ and by the same logic as above inherit
the same annotations from $K_i$ which cannot be identical because
$K_i$'s annotation is valid. If exactly one of $z$ and $z'$ includes
$\sigma$, the annotation of one in $K_{i+1}$
will have the last element $1$ and that of the
other will have $0$. Thus, they will not be identical. Consider the
remaining case where both $z$ and $z'$ include $\sigma$.
Consider the cycle $z+z'$ which cannot include $\sigma$ because
of $\mathbb{Z}_2$-additions. Then, the cycle $z+z'$ exists in
$K_i$ and cannot be in the class $[0]$ because otherwise
$[z+z']$ will remain identity in $H_p(K_{i+1})$ contradicting $[z]\neq [z']$
in $K_{i+1}$. Since $[z+z']\neq [0]$ in $H_p(K_i)$,
one has $\ann^i_z\neq \ann^i_{z'}$.
It follows that
$$\ann^{i+1}_z=[\ann^i_z,1]\neq [\ann^i_{z'},1]=\ann^{i+1}_{z'}.$$

\noindent Case (ii) of elementary inclusion: The only annotations altered
are for dimensions $p$ and $p-1$. In dimension $p$  the only
change is the addition of $\sigma$ along with its zero
annotation. In this case, $\sigma$ cannot
participate in any $p$-cycle in $K_{i+1}$ because otherwise
$\partial \sigma$ should have zero annotation in $K_i$. Therefore,
annotation for dimension $p$ remains valid in $K_{i+1}$. So, we focus
on dimension $p-1$.

Let $z$ and $z'$ be two $(p-1)$-cycles with
$[z]=[z']$ in $K_{i+1}$. Observe that both $z$ and
$z'$ are also $(p-1)$-cycles in $K_i$.
Recall that $\ann_{\partial\sigma}$ has been added to
all $(p-1)$-simplices with $u$th element equal to $1$. Hence,
the $u$th element of any $(p-1)$-cycle is exactly equal to the
parity of the number of $(p-1)$-simplices in it with $u$th element equal to $1$.
If $[z]=[z']$ in $K_i$, we have $\ann^i_z=\ann^i_{z'}$.
In particular, the $u$th element of $\ann^i_z$ and $\ann^i_{z'}$ are
the same implying that
$\ann_{\partial\sigma}$
has been added with the same parity to $\ann_z^i$ and $\ann_{z'}^i$.

Therefore, $\ann_z^{i+1}=\ann_{z'}^{i+1}$.
Consider the other case when $[z]\neq [z']$ in $K_i$.
Then, there must be a $p$-chain $D$ in $K_i$ such that
$\partial(D+\sigma)=z+z'$ in $K_{i+1}$.
We get $\partial D = z+z'+\partial\sigma$ and hence
$z+z'+\partial\sigma =0$ in $K_i$. So, the annotation of
the cycle $z+z'+\partial\sigma$ is zero in $K_i$.
Since $u$th element of $\ann^i_{\partial\sigma}$
is $1$, it must be true that $\ann^i_z$ and $\ann_{z'}^i$
differ in the $u$th element which means
$$
\ann^{i+1}_z +\ann^{i+1}_{z'}=
\ann^i_z + \ann^i_{z'} +\ann^i_{\partial\sigma}=0.
$$

Now suppose that $[z]\neq [z']$ in $K_{i+1}$. Clearly,
$[z]\neq [z']$ even in $K_i$ implying $\ann^i_z\neq \ann^i_{z'}$.
If $u$th elements of $\ann^i_z$ and $\ann^i_{z'}$ are the same,
we will have
$$\ann^{i+1}_z=\ann_z^i+\ann_{\partial\sigma}^i\neq
\ann^i_{z'}+\ann_{\partial\sigma}^i=\ann^{i+1}_{z'}
$$
proving the required. So, assume that $u$th elements of
$\ann_z^i$ and $\ann_{z'}^i$ are different. Without
loss of generality, assume that $u$th element of
$\ann_z^i$ is $1$ and that of $\ann_{z'}^i$ is $0$.
We claim that
$\ann_z^i+\ann_{\partial\sigma}^i\neq
\ann^i_{z'}$.
Suppose not. Then, by definition of annotation,
$[z+\partial\sigma]=[z']$ in $K_i$. Since $[\partial\sigma]=[0]$
in $K_{i+1}$, we have $[z+\partial\sigma]=[z]=[z']$ in
$K_{i+1}$ reaching a contradiction that $[z]\neq [z']$ in $K_{i+1}$.
Therefore, we have $\ann_z^{i+1}\neq \ann^{i+1}_{z'}$ because
$$
\ann^{i+1}_z=\ann_z^i+\ann_{\partial\sigma}^i\neq
\ann^i_{z'}=\ann^{i+1}_{z'}.
$$

\noindent
Case for elementary collapse: We already know that $f_i$ in this
case is a composition of an inclusion
$i: K_i\hookrightarrow \hat{K_i}$ and a collapse
$f_i':\hat{K_i}\rightarrow K_{i+1}$.
Since we have argued already that our updates under inclusions

maintain valid annotations, we only show that the collapse
under $f_i'$ also does so.

Recall that $f_i'$ is implemented with an annotation transfer followed
by the actual collapse.
Let $\sigma$ be a $p$-simplex where our algorithm adds
its annotation to all other $p$-simplices containing
a simplex $\tau$ that is a $(p-1)$-face of $\sigma$ adjoining $u$.
Adding $\ann_{\sigma}$ to all cofaces of $\tau$ of
codimension $1$ creates
a new annotation which is still valid for $K_i$
by Proposition~\ref{annot-push}.
At the end of all annotation transfers for all $\sigma$,
we have a valid annotation for $K_i$ with the same cohomology basis
such that all vanishing simplices have zero annotation, and each pair of mirror simplices have the same annotation.

Observe that, under the collapse
$\hat{K}_i\stackrel{f_i'}{\rightarrow} K_{i+1}$,
the set of vanishing simplices are exactly those simplices $\sigma$
for which $f_i'(\sigma)$
has a lower dimension than $\sigma$.
A pair of mirror simplices $\tau$ and $\tau'$ are those that
satisfy that $f_i'(\tau) = f_i'(\tau')$ (i.e, the simplex $\tau'$
containing $v$ coincides
with its mirror partner $\tau$ containing $u$).
Hence after the collapse,
if $f_i'(\sigma)$ is a $p$-simplex
for any $p$-simplex $\sigma\in \hat{K}_i$,
we have $\ann_{\sigma}=\ann_{f_i'(\sigma)}$ by construction.
We can now finish the argument that
this induced annotation for $K_{i+1}$ is valid.

Let $z$ and $z'$ be any two $p$-cycles in $K_{i+1}$. Let $w$ and $w'$
be two $p$-cycles in $\hat{K}_i$ so that $f_i'(w)=z$ and $f_i'(w')=z'$.
Then, $[w]=[w']$ if and only
if $[z]=[z']$ since $f'_{i_*}: H_p(K_i)\rightarrow H_p(K_{i+1})$
is an isomorphism (Proposition~\ref{lnk}).
With the modified annotation of $\hat{K}_i$ we have $\ann^i_w=\ann^i_{w'}$
if and only if $[w]=[w']$. Therefore, $[z]=[z']$ in $K_{i+1}$
if and only if $\ann^i_w=\ann^i_{w'}$. The only simplices where
$z$ and $w$ differ are either vanishing simplices or mirror
simplices. In the first case, the annotation is zero and in the second
case the annotations are the same. So, $\ann_w^i=\ann_z^{i+1}$.
Similarly, $\ann_{w'}^i=\ann_{z'}^{i+1}$. Therefore,
$\ann_z=\ann_{z'}$ if and only if $[z]=[z']$ in $K_{i+1}$.
This proves that the annotation for $K_{i+1}$ is valid.\\

\paragraph{Proof of Case (ii) of Proposition~\ref{element-just}.}
In this case, a $(p-1)$-cycle is killed as we add
$\sigma$, so in the reverse direction a cocycle is created.
As before, assume that $[z_1],\ldots,[z_g]$ be a homology
basis for $H_{p-1}(K)$.
By assumption, the $u$th element in the annotation
has been zeroed out. Let $\phi$ be the cocycle
given by $f$ and $\phi_i'$ where $i\in \{1,\ldots,u-1\}$.
Then, as before we get
\begin{eqnarray*}
[\phi(z_1),\ldots,\phi(z_g)]
&=&[\phi_i'(f_\#(z_1),\ldots, \phi_i'(f_\#(z_g))]\\
&=&[\phi_i'(z_1),\ldots,\phi_i'(z_g)]
\end{eqnarray*}
Consider any entry $\phi_i'(z_j)$ in the last vector.
If $\phi_i(z_j)$ has $u$th element $0$, then we must have
$\phi_i'(z_j)=\phi_i(z_j)$. This is because, in that case,
$z_j$ has even number of simplices whose annotations have $u$th
element $1$. Then, according to the update algorithm the annotation
$\ann_{\partial\sigma}$ is added to the simplices
in $z_j$ only even number of times in total maintaining
$\phi_i'(z_j)=\phi_i(z_j)$.

If $\phi_i(z_j)$ has $u$th element $1$, we consider the cycle
$z_j+\partial\sigma$ and observe that $[z_j+\partial\sigma]=[z_j]$
in $H_{p-1}(K')$. Then, $\phi'_i(z_j)=\phi_i'(z_j+\partial\sigma)$
since $\phi_i'$ is derived from a valid annotation for $K'$.
The cycle $z_j$ has odd number of simplices whose annotations
have $u$th element $1$ as $\phi_i(z_j)$ has $u$th element $1$.
So, $\ann_{\partial\sigma}$ has been added odd number of times
to $\ann_{z_j}$ and hence even number of times to
$\ann_{z_j+\partial\sigma}$. This implies that $\phi_i'(z_j+\partial\sigma)
= \phi_i(z_j)$ which leads to $\phi_i'(z_j)=\phi_i(z_j)$.
This immediately gives
$
[\phi(z_1),\ldots,\phi(z_g)]
=[\phi_i'(z_1),\ldots,\phi_i'(z_g)]
$ which we are required to prove.

\section{The Size of $\Rips^{\alpha(1+\eps)^k} (V_k)$}
\label{appendix:size}

%\paragraph{The size of $\Rips^{\alpha(1+\eps)^k} (V_k)$.}
\vspace*{0.1in}
We argue that we can construct every $V_k$ in such way
that each $\Rips^{\alpha(1+\eps)^k} (V_k)$ is of size linear in $n$.
We compute $V_{k+1}$ such that it is a $\delta$-net of $V_k$ for $\delta = \frac{\alpha \eps^2}{2} (1+\eps)^{k-1}$ by the following standard greedy approach:
Let $D(\cdot, \cdot)$ denote the metric on the set of input points $P$ (and thus $V_k$s).
Starting with $V_{k+1} = \emptyset$, pick an arbitrary vertex from $V_k$ and add it to $V_{k+1}$.
In the $i$th round, there are already $i$ points in $V_{k+1}$.
We identify the point $u$ from $V_k$ whose minimum distance
to points in $V_{k+1}$ is the largest.
We stop when either $D(u, V_{k+1}) \le \delta$ or $V_{k+1} = V_k$.
By construction, when this process terminates, any point in $V_k$ is within $\delta$ distance to some point in $V_{k+1}$, and no two points in $V_{k+1}$ are within $\delta$ distance.
A na\"{i}ve implementation of the above procedure takes $O(n^2)$ time. One can also compute the $\delta$-net $V_{k+1}$ more efficiently in $O(n\log n)$ time (see, e.g, \cite{HM06}). However, we remark that this step does not form a bottleneck in the time complexity as computing persistence diagrams takes time cubic in the number of simplices.

\begin{proposition}
Suppose the set of input points $V$ are from a metric space with doubling dimension $d$. For $V_k$s constructed as described above, the number of $p$-simplices in $\Rips^{\alpha(1+\eps)^{k+1}} (V_{k+1})$ is $O((\frac{1}{\eps})^{O(dp)} \cdot n)$ for $0\le \eps \le 1$.
\label{lem:size}
\end{proposition}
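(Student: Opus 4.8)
The plan is to bound the number of $p$-simplices by a standard packing argument driven by the doubling dimension. The key observation is that any $p$-simplex of $\Rips^{\alpha(1+\eps)^{k+1}}(V_{k+1})$ is a set of $p+1$ vertices that are pairwise within distance $r := \alpha(1+\eps)^{k+1}$; in particular, if we single out one of its vertices $w$, then all $p+1$ vertices lie inside the ball $B(w,r)$. Thus it suffices first to bound the number $N$ of vertices of $V_{k+1}$ inside a ball of radius $r$, and then to count simplices by fixing a representative vertex and choosing the remaining $p$ vertices locally.

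First I would establish the local packing bound. Because $V_{k+1}$ is a $\delta$-net of $V_k$ with $\delta = \frac{\alpha\eps^2}{2}(1+\eps)^{k-1}$, the net property guarantees that the points of $V_{k+1}$ are pairwise at distance at least $\delta$. In a metric space of doubling dimension $d$, a ball of radius $r$ can be covered by $(2r/\delta)^{O(d)}$ balls of radius $\delta/2$, obtained by iterating the doubling property $\lceil \log_2 (2r/\delta)\rceil$ times. Since two $\delta$-separated points cannot share a ball of radius $\delta/2$, each covering ball holds at most one vertex of $V_{k+1}$, so $N \le (2r/\delta)^{O(d)}$. The ratio $r/\delta$ is then computed explicitly as
$$
\frac{r}{\delta} = \frac{\alpha(1+\eps)^{k+1}}{\frac{\alpha\eps^2}{2}(1+\eps)^{k-1}} = \frac{2(1+\eps)^2}{\eps^2} \le \frac{8}{\eps^2},
$$
using $0 \le \eps \le 1$, which yields $N = (1/\eps)^{O(d)}$.

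Finally I would count the simplices. Fixing any vertex $w \in V_{k+1}$, every $p$-simplex containing $w$ is determined by a choice of $p$ further vertices from the at most $N$ vertices of $V_{k+1}$ lying in $B(w,r)$, so there are at most $\binom{N}{p} \le N^p = (1/\eps)^{O(dp)}$ of them. Summing this over all $|V_{k+1}| \le n$ choices of $w$ (each $p$-simplex being counted $p+1$ times) gives the claimed total bound $O\big((1/\eps)^{O(dp)} \cdot n\big)$. The only subtlety, modest as it is, lies in making the packing step rigorous: one must verify that the $\delta$-separation of the net combined with the doubling property delivers exactly the $(r/\delta)^{O(d)}$ bound, and that the multiplicative constants arising from the doubling dimension (which depend on $d$ but not on $\eps$) are correctly absorbed into the $(1/\eps)^{O(dp)}$ notation.
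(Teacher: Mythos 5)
Your proposal is correct and follows essentially the same route as the paper: both use the $\delta$-separation of the net $V_{k+1}$ together with a doubling-dimension packing/covering argument to bound the number of neighbors of a fixed vertex within radius $r$ by $(1/\eps)^{O(d)}$, compute the same ratio $r/\delta = 2(1+\eps)^2/\eps^2$, and then count $p$-simplices per vertex as $N^p$ before summing over the at most $n$ vertices. No substantive difference.
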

\begin{proof}
For simplicity, set $r := \alpha(1+\eps)^{k+1}$; note that $\delta = \frac{\eps^2}{2(1+\eps)^{2}} r$.
We first prove that there are $O((\frac{1}{\eps})^{O(d)})$ number of edges
for each vertex in $V_{k+1}$.
Specifically, consider a node $u \in V_{k+1}$: it will be connected to all other vertices in $V_{k+1}$ that are within distance $r$ to $u$.
Since $V_{k+1}$ is a $\delta$-net of $V_k$, every node in $V_{k+1}$ has a ball centered at it with radius $\delta/2$ that is empty of other points in $V_{k+1}$. 
Since the points are from a metric space with doubling dimension $d$, we can pack only $O( (\frac{r}{\delta/2})^{d}) = O((\frac{4(1+\eps)^{2}}{\eps^2})^d) = O((\frac{4}{\eps^2})^d)$
(for $0\le \eps \le 1$) number of balls of radius $\delta/2$ in a ball of radius $r$.
This means that there are only $O((\frac{1}{\eps})^{2d})$ number of edges
containing $u$, where the big-$O$ notation hides terms exponential in $d$. It then follows that the number of $p$-simplices containing $u$ is $O((\frac{1}{\eps})^{2dp})$. Since there are $|V_{k+1}| \le n$ number of vertices in $V_{k+1}$, the total number of $p$-simplices is bounded by $O((\frac{1}{\eps})^{O(dp)} n)$ as claimed.
\end{proof}

\end{document}